\DeclareMathAlphabet{\pazocal}{OMS}{zplm}{m}{n}
\algnewcommand\INPUT{\item[\textbf{Input:}]}%
\algnewcommand\OUTPUT{\item[\textbf{Output:}]}%
\newtheorem{theorem}{Theorem}[section]
\newtheorem{lemma}{Lemma}[section]
\newtheorem{prop}{Proposition}[section]
\newtheorem{remark}{Remark}[section]
\newtheoremstyle{MyNonumberplain}%
{\item[\theorem@headerfont\hskip\labelsep ##1\theorem@separator]}%
{\item[\theorem@headerfont\hskip\labelsep ##3\theorem@separator]}
\theoremstyle{MyNonumberplain}
\newtheorem{refproof}{Proof}
\newtheorem{proof}{Proof}
\DeclareMathOperator{\interior}{int}
\begin{document}
	
\onehalfspacing
	
\title{Distributed Model Predictive Control with Reconfigurable Terminal Ingredients for Reference Tracking\footnote{This work is supported by the Swiss Innovation Agency Innosuisse under the Swiss Competence Center for Energy Research SCCER FEEB$\&$D and the European Research Council under	the ERC Advanced Grant agreement no. 787845 (OCAL). (Corresponding Author: Ahmed Aboudonia)}}
\author{Ahmed Aboudonia, Annika Eichler, Francesco Cordiano, \\ Goran Banjac, and John Lygeros
\footnote{A. Aboudonia, F. Cordiano, G. Banjac and J. Lygeros are with the Automatic Control Laboratory, Department of Electrical Engineering and Information Technology, ETH Zurich, 8092 Zurich, Switzerland. (emails: $\{$ahmedab,gbanjac,lygeros$\}$@control.ee.ethz.ch and fcordiano@student.ethz.ch). A. Eichler is with the Deutsches Elektronen-Synchroton DESY, 22607 Hamburg, Germany (e-mail: annika.eichler@desy.de).}}

\date{}
\maketitle

\begin{abstract}
	
	Various efforts have been devoted to developing stabilizing distributed Model Predictive Control (MPC) schemes for tracking piecewise constant references. In these schemes, terminal sets are usually computed offline and used in the MPC online phase to guarantee recursive feasibility and asymptotic stability. Maximal invariant terminal sets do not necessarily respect the distributed structure of the network, hindering the distributed implementation of the controller. On the other hand, ellipsoidal terminal sets respect the distributed structure, but may lead to conservative schemes. In this paper, a novel distributed MPC scheme is proposed for reference tracking of networked dynamical systems where the terminal ingredients are reconfigured online depending on the closed-loop states to alleviate the aforementioned issues. The resulting non-convex infinite-dimensional problem is approximated using a quadratic program. The proposed scheme is tested in simulation where the proposed MPC problem is solved using distributed optimization.
	
\end{abstract}

%\begin{IEEEkeywords}
%Distributed control, Networked dynamical systems, Predictive control
%\end{IEEEkeywords}

\section{Introduction}
\label{sec:introduction}
Various distributed Model Predictive Control (MPC) schemes have been proposed for constrained networked dynamical systems. This is because distributed MPC has several advantages, such as increased privacy, robustness against failure and scalability when controlling such systems (see, e.g. \cite{maestre2014distributed,negenborn2014distributed}). Although many of these schemes are developed for regulation problems, tracking non-zero target points is found to be crucial in many applications. Thus, several distributed MPC schemes have been developed for tracking piecewise constant references (see, e.g. \cite{farina2013solution,razzanelli2017parsimonious,kogel2013set}). In \cite{ferramosca2013cooperative}, a distributed MPC scheme is developed where the maximal invariant terminal set for tracking developed in \cite{limon2008mpc} is used. This terminal set, however, does not respect the distributed structure of the system and couples all subsystems. Polytopic sets can still be used with distributed MPC while respecting the structure \cite{riverso2015plug,boem2019distributed}. A distributed MPC scheme with ellipsoidal terminal sets is also developed in \cite{conte2013cooperative}. Although this scheme respects the distributed structure of the system, it turns out to be conservative leading to relatively small feasible regions.
Various methods have been developed to alleviate the conservatism imposed by terminal sets and enlarge the resulting feasible regions. These methods include using a reference governor \cite{nicotra2018embedding,di2018cascaded}, dynamic terminal set transformation \cite{simon2014reference}, generalized terminal ingredients \cite{mayne2016generalized,fagiano2013generalized} and construction of terminal sets using feasible trajectories \cite{brunner2015stabilizing}. Although the above-mentioned methods are developed for centralized MPC schemes, some are extended to distributed MPC schemes as in \cite{trodden2017distributed,darivianakis2019distributed,aboudonia2019distributed,wang2016distributed} where the terminal ingredients are computed online. However, these schemes are mainly developed for regulation problems.
%Few methods are also proposed for tracking distributed MPC schemes such as those based on cyclic varying horizons \cite{kogel2013set}.

In this work, we develop a novel distributed MPC with reconfigurable terminal ingredients for reference tracking of networked dynamical systems with a distributed structure. Unlike \cite{ferramosca2013cooperative}, the terminal ingredients are designed to respect the distributed structure while alleviating the conservatism of \cite{conte2013cooperative}. Although the resulting optimal control problem is infinite-dimensional, it can be formulated as a semi-infinite program by restricting the terminal ingredients to ellipsoidal sets and affine controllers. Ellipsoidal terminal sets are used in this work since they can be defined using the level sets of the Lyapunov function. Using robust optimization tools, the infinite number of constraints is then transformed into a finite number of matrix inequalities yielding a finite, albeit non-convex mathematical program. This is in turn shown to be equivalent to a semidefinite program (SDP) through a change of variables. To improve computational performance, the resulting SDP can further be approximated by a quadratic program using diagonal dominance \cite{ahmadi2017sum}. We prove that the proposed scheme is recursively feasible and the target point under this controller is asymptotically stable. Finally, we evaluate the efficacy of the proposed scheme via simulation where we solve the MPC problem using distributed optimization techniques \cite{boyd2011distributed}. %The proposed schemes are considered as a first step towards developing less-conservative distributed safe-learning and adaptive predictive controllers for large-scale systems, where the model is either learned or adapted online and the terminal ingredients are configured accordingly.

In Section II, the distributed MPC problem formulation is introduced, followed by the distributed MPC scheme in Section III. The asymptotic stability and recursive feasibility are established in Section IV, while Section V presents the two numerical examples. Section VI provides concluding remarks.
\section{PROBLEM FORMULATION}

\label{sec2}

We consider networked dynamical systems with linear time-invariant dynamics subject to polytopic state and input constraints. We assume that these systems can be decomposed into a set of $M$ subsystems, each of which has a set of neighbours $\pazocal{N}_i$ for all $i \in \{1,\ldots,M\}$. Two subsystems are considered as neighbors if the states of one appear in the dynamics and/or constraints of the other. We assume that $i \in \pazocal{N}_i$ for all $i =\{1, \ldots, M\}$.

We denote the state and input vectors of the $i$-th subsystem by $x_i \in \mathbb{R}^{n_i}$ and $u_i \in \mathbb{R}^{m_i}$, respectively. We also define $x_{N_i} \in \mathbb{R}^{n_{N_i}}$ to be a concatenated state vector comprising the states of the subsystems in the set $\pazocal{N}_i$. Inspired by \cite{ferramosca2013cooperative,conte2013cooperative}, the standard tracking distributed MPC problem is given by
\vspace{-0.75cm}
\begin{table}[H]
	\normalsize
	\begin{subequations}
		\label{sec2_basic}
		\begin{align}
			& \nonumber \min_{x_i(t),u_i(t),x_{e_i},u_{e_i}} \ \sum_{i=1}^M J_i(x_{N_i},u_i,x_{e_{N_i}},u_{e_i}) \ s.t. \\ 
			& \nonumber \forall i \in \{1,\ldots,M\} \ \& \ \forall t \in \{0,\ldots,T-1\} \\		
			& \label{sec2_stdA} x_{N_i}(0) = x_{N_{i,0}}, \quad x_i(t+1) = A_ix_{N_i}(t)+B_i u_i(t), \\
			& x_{N_i}(t) \in \pazocal{X}_{N_i} = \{ x_{N_i} \in \mathbb{R}^{n_{N_i}}: G_i x_{N_i} \leq g_i \}, \\
			& u_i(t) \in \pazocal{U}_i = \{ u_i \in \mathbb{R}^{m_i}: H_i u_i \leq h_i \}, \\	
			& \label{sec2_stdD} x_{e_i} = A_i x_{e_{N_i}} +B_i u_{e_i}, \ x_{e_{N_i}} \in \operatorname{int}(\pazocal{X}_{N_i}), \\
			& \label{sec2_stdD2} u_{e_i} = \kappa_i(x_{e_{N_i}}) \in \operatorname{int}(\pazocal{U}_i), \\
			& \label{sec2_stdE} x_i(T) \in \pazocal{X}_{f_i},
		\end{align}
	\end{subequations}
\end{table}
\vspace{-0.75cm}
\noindent where $T \in \mathbb{N}_+$ is the prediction horizon, $A_i \in \mathbb{R}^{n_i \times n_{N_i}}$ and $B_i \in \mathbb{R}^{n_i \times m_i}$ are the system matrices, $\pazocal{X}_{N_i}$ and $\pazocal{U}_i$ are the state and input constraint sets defined by matrices $G_i \in \mathbb{R}^{n_{q_i} \times n_{N_i}}$, $H_i \in \mathbb{R}^{n_{r_i} \times m_i}$, $g_i \in \mathbb{R}^{n_{q_i}}$ and $h_i \in \mathbb{R}^{n_{r_i}}$, the pair $(x_{e_i},u_{e_i})$ is an artificial equilibirum point to which we aim to converge at the current timestep, $\kappa_i(\cdot)$ is the terminal controller, $\pazocal{X}_{f_i}$ is the terminal set and $\operatorname{int}(\cdot)$ refers to the interior of a set. We also define $x_{e_{N_i}}$ as the vector comprising the artificial equilibrium points of the subsystems in the set $\pazocal{N}_i$ and $x_{N_{i,0}}$ as the current states of the subsystems in the set $\pazocal{N}_i$. We assume that
$
\label{sec21_lc}
J_i(x_{N_i},u_i,x_{e_i},u_{e_i}) = \sum_{t=0}^{T-1} \left\{
\|x_{N_i}(t)-x_{e_{N_i}}\|^2_{Q_i} + \|u_i(t)-u_{e_i}\|^2_{R_i} \right\}
+ \|x_i(T)-x_{e_i}\|^2_{P_i} + \|x_{e_i}-x_{r_i}\|^2_{S_i},
$
where $Q_i \in \mathbb{S}_{++}^{n_{N_i}}$, $R_i \in \mathbb{S}_{++}^{m_i}$, $P_i \in \mathbb{S}_{++}^{n_i}$ and $S_i \in \mathbb{S}_{++}^{n_i}$ and $x_{r_i} \in \mathbb{R}^{n_i}$ is the target point.

Unlike the standard distributed MPC schemes for tracking, we assume that the local terminal controllers $\kappa_i(\cdot)$ for all $i \in \{1,...,M\}$ and the local terminal sets $\pazocal{X}_{f_i}$ for all $i \in \{1,...,M\}$ are decision variables. Note that, in this case, the terminal ingredients depend on the closed-loop states. We restrict the terminal controllers to the set of affine functions (i.e. $\kappa_i(x_{N_i})=K_i x_{N_i} + d_i$ where $K_i \in \mathbb{R}^{m_i \times n_{N_i}}$ is the control gain matrix and $d_i \in \mathbb{R}^{m_i}$ is the feedforward term) and the terminal sets to the set of ellipsoids (i.e. $\pazocal{X}_{f_i}=\{x_i \in \mathbb{R}^{n_i} : (x_i-c_i)^\top P_i (x_i-c_i) \leq \alpha_i\}$ where $c_i$ and $\alpha_i$ determine the center and size of the terminal set, respectively). In this case, the terminal constraint in \eqref{sec2_stdE} can be written by means of Schur complement as
\begin{equation}
	\label{sec2_trmCon}
	\begin{bmatrix}
	P_i^{-1} \alpha_i^{1/2} & x_i(T)-c_i \\
	x_i(T)^\top-c_i^\top & \alpha_i^{1/2}
	\end{bmatrix}
	\geq
	0.
\end{equation}

Since the terminal ingredients are computed online, extra constraints should be added to \eqref{sec2_basic} to ensure asymptotic stability of the terminal dynamics and invariance of the terminal set. For this purpose, we make use of the conditions derived in \cite{jokic2009decentralized,darivianakis2019distributed}. In the case of affine terminal controllers and ellipsoidal terminal sets, these conditions reduce to
\vspace{-0.75cm}
\begin{table}[H]
	\normalsize
	\begin{subequations}
		\label{sec2_trm}
		\begin{gather}
		\nonumber
		\forall i \in \{1,...,M\}, \quad \forall j \in \pazocal{N}_i, \quad \forall x_j \in \pazocal{X}_{f_j}, \\
		\label{sec2_trmA}
		\|(A_i+B_iK_i)x_{N_i}+B_id_i-c_i\|^2_{P_i} \leq \alpha_i, \\
		\label{sec2_trmB}
		G_i x_{N_i} \leq g_i, \\
		\label{sec2_trmC}
		H_i (K_i x_{N_i} + d_i) \leq h_i, \\
		\label{sec2_trmD}
		\delta_{1,i} \|x_i-x_{e_i}\|^2 \leq \|x_i-x_{e_i}\|^2_{P_i} \leq \delta_{2,i} \|x_i-x_{e_i}\|^2, \\
		\label{sec2_trmE}
		\delta_{3,i} \|x_{N_i}-x_{e_{N_i}}\|^2 \leq \|x_{N_i}-x_{e_{N_i}}\|^2_{Q_i+K_i^\top R_i K_i}, \\
		\label{sec2_trmF}
		\|x_i-x_{e_i}\|^2_{P_i} - \|x_{N_i}-x_{e_{N_i}}\|^2_{(A_i+B_i K_i)^\top P_i (A_i + B_i K_i) + Q_i + K_i^\top R_i K_i - \Gamma_i} \geq 0, \\
		\label{sec2_trmG}
		\sum_{i=1}^{M} x_{N_i}^\top {\Gamma_i} x_{N_i} \leq 0.
		\end{gather}
	\end{subequations}
\end{table}
\vspace{-0.75cm}
\noindent where $\delta_{1,i}$, $\delta_{2,i}$ and $\delta_{3,i}$ are positive scalars and $\Gamma_i \in \mathbb{S}^{n_{N_i}}$ are symmetric matrices. Notice that the constraints are uncountable as they should be satisfied for all $x_j$ in the ellipsoidal sets $\pazocal{X}_{f_j}$. Although $\Gamma_i$ are decision variables in the optimal control problem, the scalars $\delta_{1,i}$, $\delta_{2,i}$ and $\delta_{3,i}$ are known a priori and their choice is discussed in Section \ref{sec3}. Constraints \eqref{sec2_trmA}-\eqref{sec2_trmC} are responsible for ensuring positive invariance of the terminal sets, whereas \eqref{sec2_trmD}-\eqref{sec2_trmG} are responsible for ensuring stability of the terminal dynamics. If one uses a global terminal set, \eqref{sec2_trmD}-\eqref{sec2_trmG} provide implicit conditions for the invariance of this terminal set making \eqref{sec2_trmA} redundant. Here, however, we consider local terminal sets, thus \eqref{sec2_trmA} is still required.

Combining \eqref{sec2_basic}, \eqref{sec2_trmCon} and \eqref{sec2_trm} leads to
	\begin{align}
		\label{sec2_ocp}
		& \min_{DV} \ \sum_{i=1}^M J_i(x_{N_i},u_i,x_{e_i},u_{e_i}) \ s.t.\\ 	
		& \nonumber \quad   \eqref{sec2_stdA}-\eqref{sec2_stdD2}, \eqref{sec2_trmCon} \ \forall i \in \{1,\ldots,M\}, \ \forall t \in \{0,\ldots,T-1\}, \\
		& \nonumber \quad  \eqref{sec2_trmA}-\eqref{sec2_trmG} \ \forall i \in \{1,\ldots,M\}, \ \forall j \in \pazocal{N}_i, \  \forall x_j \in \pazocal{X}_{f_j},
	\end{align}
where the decision variables are $DV = \{x_i(t),u_i(t),x_{e_i},u_{e_i},x_i(T),\alpha_i,c_i,K_i,d_i,\Gamma_i\}$ for all $i \in \{1,...,M\}$ and $t \in \{0,...,T-1\}$. Note that the matrix $P$ is the result of an offline optimization problem as proposed in \cite{conte2016distributed}.

For ease of notation, we denote the global state and input vectors of the overall system by $x = (x_1, \ldots, x_M) \in \mathbb{R}^n$ and $u = (u_1, \ldots, u_M) \in \mathbb{R}^m$. Hence, the global dynamics is given by $x(t+1)=Ax(t)+Bu(t)$ where $A \in \mathbb{R}^{n \times n}$ and $B \in \mathbb{R}^{n \times m}$. We assume that the pair $(A,B)$ is controllable. We also denote the global artificial equilibrium and target points by $(x_e,u_e)$ and $x_r=(x_{r_1}, \ldots, x_{r_M}) \in \mathbb{R}^n$ where $x_e=(x_{e_1}, \ldots, x_{e_M}) \in \mathbb{R}^n$ and $u_e=(u_{e_1}, \ldots, u_{e_M}) \in \mathbb{R}^m$. Hence, the global cost function can be written as $J=\sum_{t=0}^{T-1} \left\{ \|x(t)-x_e\|^2_Q + \|u(t)-u_e\|^2_R \right\} + \|x(T)-x_e\|^2_P + \|x_e-x_r\|^2_S$ where $Q \in \mathbb{S}_{++}^{n}$, $R \in \mathbb{S}_{++}^{m}$, $P \in \mathbb{S}_{++}^{n}$ and $S \in \mathbb{S}_{++}^{n}$. Finally, the global terminal set as well as the global state and input constraint sets are denoted by $\pazocal{X}_f$, $\pazocal{X}$ and $\pazocal{U}$, respectively, and the global terminal controller is denoted by $\kappa(x)=Kx+d$. Note that the global terminal set is defined as $\pazocal{X}_f = \pazocal{X}_{f_1} \times \hdots \times \pazocal{X}_{f_M}$. The matrices $A$, $B$, $Q$, $R$, $P$, $S$ and $K$, the vector $d$ and the sets $\pazocal{X}$ and $\pazocal{U}$ can be constructed using the local matrices, vectors and sets in the obvious way. The local variables of the $i$-th subsystem can be extracted from the global variables using the mappings $U_i \in \{0,1\}^{n_i \times n}$, $W_i \in \{0,1\}^{ n_{N_i} \times n}$ and $V_i \in \{0,1\}^{m_i \times m}$ where
\begin{equation}
\begin{aligned}
\label{sec2_map}
x_i = U_i x, \quad x_{N_i} = W_ix, \quad u_i = V_i u.
\end{aligned}
\end{equation}
To ensure that the target point $x_r$ is reachable, we assume that it satisfies the state constraints (and the corresponding input  satisfies the input constraints).

\section{DISTRIBUTED MPC SCHEME}

\label{sec3}

The optimization problem \eqref{sec2_ocp} involves a finite number of decision variables but an infinite number of constraints. This is because constraints \eqref{sec2_trmA}-\eqref{sec2_trmG} should be satisfied for all $x_j \in \pazocal{X}_{f_j}$ where $j \in \pazocal{N}_i$ and $i \in \{1,...,M\}$. We show how these constraints can be transformed into a finite number of matrix inequalities. To simplify the notation, we define $\alpha=diag(\alpha_1 I_{n_1},...,\alpha_M I_{n_M})$, $c=[c_1^\top,...,c_M^\top]^\top$, $\alpha_{{N}_i}=W_i \alpha W_i^\top$ and $c_{{N}_i}=W_i c$. 

We start with \eqref{sec2_trmA} which ensures the invariance of local terminal sets; where \eqref{sec3_intA} and \eqref{sec3_trmA} are shown overleaf in single column.
\begin{prop}
	\label{slemma1}
	The terminal set invariance condition \eqref{sec2_trmA} of the $i$-th subsystem holds for all $x_j \in \pazocal{X}_{f_j}$ where $j \in \pazocal{N}_i$ if there exist scalars $\rho_{ij}\geq0$ such that (\ref{sec3_trmA}) holds where $P_{ij}=W_i U_j^\top P_j U_j W_i^\top$.
\end{prop}
\begin{proof}
	Define the auxiliary vector $s \in \mathbb{R}^n$ such that
	$ x = c + \alpha^{1/2} s $
	where $s_i = U_i s$ and $s_{N_i} = W_i s$. Substituting $s_i$ and $s_{N_i}$ in \eqref{sec2_trmA}, using the mapping equations (\ref{sec2_map}) and multiplying the resulting inequality by $\alpha_i^{-1/2}$ leads to (\ref{sec3_intA}). Applying the S-Lemma \cite{boyd1994linear} followed by the Schur complement results in \eqref{sec3_trmA}.
\end{proof}

\begin{table*}
	\normalsize
	\begin{equation}
	\label{sec3_intA}
	\left.
	\begin{aligned}
	&s_{{N}_i}^\top (A_i \alpha_{{N}_i}^{1/2}+B_i K_i \alpha_{{N}_i}^{1/2})^\top P_i \alpha_i^{-1/2} (A_i \alpha_{{N}_i}^{1/2}+B_i K_i \alpha_{{N}_i}^{1/2}) s_{{N}_i}
	\\ & + 2
	[A_i c_{{N}_i} + B_i ( K_i c_{{N}_i} +d_i ) - c_i]^\top P_i \alpha_i^{-1/2} (A_i \alpha_{{N}_i}^{1/2}+B_i K_i \alpha_{{N}_i}^{1/2}) s_{{N}_i}
	\\ & +
	[A_i c_{{N}_i} + B_i (K_i c_{{N}_i} + d_i) - c_i]^\top P_i \alpha_i^{-1/2}
	[A_i c_{{N}_i} + B_i (K_i c_{{N}_i} + d_i) - c_i]
	\leq \alpha_i^{1/2}
	\end{aligned}
	\right\}
	\forall  j \in \pazocal{N}_i, \ s_{{N}_i}^\top P_{ij} s_{{N}_i} \leq 1
	\end{equation}
	\normalsize
	\begin{equation}
	\label{sec3_trmA}
	\begin{bmatrix}
	P_i^{-1} \alpha_i^{1/2} & (A_i \alpha_{{N}_i}^{1/2}+B_i K_i \alpha_{{N}_i}^{1/2})
	&
	[A_i c_{{N}_i} + B_i (K_i c_{{N}_i} + d_i) - c_i]
	\\
	(A_i \alpha_{{N}_i}^{1/2}+B_i K_i \alpha_{{N}_i}^{1/2})^\top & \sum_{j \in \pazocal{N}_i} \rho_{ij} P_{ij} & 0 \\
	[A_i c_{{N}_i} + B_i (K_i c_{{N}_i} + d_i) - c_i]^\top & 0 & \alpha_i^{1/2} - \sum_{j \in \pazocal{N}_i} \rho_{ij}
	\end{bmatrix}
	\geq 0
	\end{equation}
	\begin{equation}
	\label{sec3_trmF}
	\begin{bmatrix}
	W_i U_i^\top P_i U_i W_i^\top \alpha_i^{1/2} + F_{{N}_i} &  \alpha_{{N}_i}^{1/2} A_i^\top + \alpha_{{N}_i}^{1/2} K_i^\top B_i^\top & \alpha_{{N}_i}^{1/2} Q_i^{{1/2}^\top} & \alpha_{{N}_i}^{1/2} K_i^\top R_i^{1/2} \\
	A_i \alpha_{{N}_i}^{1/2} + B_i K_i \alpha_{{N}_i}^{1/2} & P_i^{-1} \alpha_i^{1/2} & 0 & 0 \\
	Q_i^{1/2} \alpha_{{N}_i}^{1/2} & 0 & \alpha_i^{1/2} I_{n_{N_i}} & 0 \\
	R_i^{1/2}  K_i \alpha_{{N}_i}^{1/2} & 0 & 0 & \alpha_i^{1/2} I_{m_i}
	\end{bmatrix} \geq 0
	\end{equation}
	\rule{\textwidth}{0.4pt}
\end{table*}

%\begin{table*}
%	\normalsize
%	\begin{equation}
%	\label{sec3_trmF}
%	\begin{bmatrix}
%	W_i U_i^\top P_i U_i W_i^\top \alpha_i^{1/2} + F_{{N}_i} &  \alpha_{{N}_i}^{1/2} A_i^\top + \alpha_{{N}_i}^{1/2} K_i^\top B_i^\top & \alpha_{{N}_i}^{1/2} Q_i^{{1/2}^\top} & \alpha_{{N}_i}^{1/2} K_i^\top R_i^{1/2} \\
%	A_i \alpha_{{N}_i}^{1/2} + B_i K_i \alpha_{{N}_i}^{1/2} & P_i^{-1} \alpha_i^{1/2} & 0 & 0 \\
%	Q_i^{1/2} \alpha_{{N}_i}^{1/2} & 0 & \alpha_i^{1/2} I_{n_{N_i}} & 0 \\
%	R_i^{1/2}  K_i \alpha_{{N}_i}^{1/2} & 0 & 0 & \alpha_i^{1/2} I_{m_i}
%	\end{bmatrix} \geq 0
%	\end{equation}
%	\rule{\textwidth}{0.4pt}
%	\vspace{-0.5cm}
%\end{table*}

Next, we proceed with \eqref{sec2_trmB} which ensures that all state constraints are satisifed inside the local terminal sets.

\begin{prop}
	\label{slemma2}
	Let $G^k_i$ be the $k$-th row of the matrix $G_i$ and $g^k_i$ the $k$-th element of the vector $g_i$. The $k$-th state constraint of the $i$-th subsystem given by
	$
	G_i^k x_{N_i} \leq g_i^k, \ \forall j \in \pazocal{N}_i, \ x_j : (x_j-c_j)^\top P_j (x_j-c_j) \leq \alpha_j
	$
	holds if there exist $\sigma_{ij}^k \geq 0$ such that
	\begin{equation}
	\label{sec3_trmB}
	\begin{aligned}
	\begin{bmatrix}
	\sum_{j\in\pazocal{N}_i}\sigma_{ij}^k P_{ij} & \frac{1}{2} \alpha_{{N}_i}^{1/2} G_i^{k^\top} \\
	\frac{1}{2} G_i^k \alpha_{{N}_i}^{1/2} & g_i^k - 		G_i^k   c_{{N}_i} - \sum_{j\in\pazocal{N}_i}\sigma_{ij}^k
	\end{bmatrix}
	\geq 0.
	\end{aligned}
	\end{equation}
\end{prop}

The proof follows that of Proposition 6 in \cite{aboudonia2019distributed}.
Similarly, we transform constraint \eqref{sec2_trmC} into a matrix inequality as follows.

\begin{prop}
	\label{slemma3}
	Let $H^l_{i}$ be the $l$-th row of the matrix $H_{i}$ and $h^l_{i}$ the $l$-th element of the vector $h_{i}$. The $l$-th input constraint of the $i$-th subsystem given by
	\begin{equation}
	\begin{aligned}
	\label{sec3_intC}
	H_{i}^l (K_i x_{N_i} + d_i) \leq h_{i}^l, \forall j \in \pazocal{N}_i, \ x_j:(x_j-c_j)^\top P_j (x_j-c_j) \leq \alpha_j,
	\end{aligned}
	\end{equation}
	holds if  there exist $\tau_{ij}^l \geq 0$ such that
	\begin{equation}
	\label{sec3_trmC}
	\begin{aligned}
	\begin{bmatrix}
	\sum_{j\in\pazocal{N}_i}\tau_{ij}^l P_{ij} & \frac{1}{2} \alpha_{N_i}^{1/2}
	K_i^\top  H_{i}^{l^\top} \\
	\frac{1}{2} H_{i}^l K_i \alpha_{{N}_i}^{1/2} & h_{i}^l -			H_{i}^l (K_i c_{{N}_i} + d_i) - \sum_{j\in\pazocal{N}_i}\tau_{ij}^l
	\end{bmatrix}
	\geq 0.
	\end{aligned}
	\end{equation}
\end{prop}
\begin{proof}
	Recall the definitions of the auxiliary vectors $s_i$ and $s_{{N}_i}$. Substituting these in (\ref{sec3_intC}) and making use of the mapping equations (\ref{sec2_map}) yield
	$$
	\begin{aligned}
	H_i^l K_i \alpha_{{N}_i}^{1/2} s_{{N}_i}
	+H_i^l K_i  c_{{N}_i} + H_i^l d_i
	\leq h_i^l,
	\ \forall j \in \pazocal{N}_i, \ s_{{N}_i}^\top P_{ij} s_{{N}_i} \leq 1.
	\end{aligned}
	$$
	Applying the S-lemma \cite{boyd1994linear} and rearranging result in (\ref{sec3_trmC}).
\end{proof}

Finally, we convert constraints \eqref{sec2_trmD}-\eqref{sec2_trmG} into a finite number of matrix inequalities; where \eqref{sec3_trmF} is given overleaf in single column.

\begin{prop}
	\label{slemma4}
	The stability constraints \eqref{sec2_trmD}-\eqref{sec2_trmG} hold if \eqref{sec3_trmF} holds for all $i \in \{1,...,M\}$ and there exist block-diagonal matrices $T_i \in \mathbb{S}^{n_{N_i}}$ such that
	\begin{subequations}
		\label{sec3_trmG}
		\begin{gather}
		\label{sec3_trmG1}
		\alpha_{N_i}^{1/2} \Gamma_i \alpha_i^{-1/2} \alpha_{N_i}^{1/2} \leq T_i \quad \forall i \in \{1,...,M\},\\
		\label{sec3_trmG2}
		\sum_{j \in \pazocal{N}_i} U_i W_{j}^\top T_j W_{j} U_i^\top \leq 0 \quad \forall i \in \{1,...,M\}.
		\end{gather}
	\end{subequations}	
\end{prop}
\begin{proof}
	Recall that $P_i \in \mathbb{S}_{++}^{n_i}$ and let $\delta_{1,i}=\lambda_{\min}(P_i)$ and $\delta_{2,i}=\lambda_{\max}(P_i)$ where $\lambda_{\min}(P_i)$ and $\lambda_{\max}(P_i)$ are the minimum and maximum eigenvalues of the matrix $P_i$, respectively. Then, condition \eqref{sec2_trmD} is always satisfied. Similarly, recall that $Q_i \in \mathbb{S}_{++}^{n_{N_i}}$ and $R_i \in \mathbb{S}_{++}^{m_i}$ and
	%by Assumption \ref{asm4} 
	let $\delta_{3,i}=\lambda_{\min}(Q_i+K_i^\top R_i K_i)$ where $\lambda_{\min}(Q_i+K_i^\top R_i K_i)$ is the minimum eigenvalue of the matrix $Q_i+K_i^\top R_i K_i$. Then, condition \eqref{sec2_trmE} is always satisfied. Hence, conditions \eqref{sec2_trmD} and \eqref{sec2_trmE} can be omitted from the optimal control problem.
	Following \cite{darivianakis2019distributed}, we prove that (\ref{sec3_trmF}) is a sufficient condition for \eqref{sec2_trmF} and
	\begin{equation}
	\label{sec3_intG}
	\sum_{i=1}^{M} W_i^\top \alpha_{{N}_i}^{1/2} \Gamma_i \alpha_i^{-1/2} \alpha_{{N}_i}^{1/2} W_i \leq 0.
	\end{equation}
	is a sufficient condition for \eqref{sec2_trmG}.
	Following \cite{conte2016distributed}, we introduce the block-diagonal matrices $T_i$ by requiring the inequalities in \eqref{sec3_trmG1}. Thus, condition \eqref{sec3_intG} can be ensured by means of the inequalities in \eqref{sec3_trmG2}.
\end{proof}

We note that due to the use of the S-lemma in Propositions \ref{slemma1}-\ref{slemma4}, the derived matrix inequalities are only sufficient conditions for the constraints in \eqref{sec2_trmA}-\eqref{sec2_trmG}.
According to Propositions \ref{slemma1}-\ref{slemma4}, we require that the auxiliary decision variables $\rho_{ij}$, $\sigma_{ij}^k$ and $\tau_{ij}^l$ are non-negative, that is,
\begin{equation}
\label{sec3_slemma}
\begin{aligned}
&\rho_{ij} \ge 0, \ \sigma^k_{ij} \ge 0, \ \tau^l_{ij} \ge 0, \\ 
\forall k \in \{1,...,q_i\},
&\forall l \in \{1,...,r_i\},
\forall j \in \pazocal{N}_i,
\forall i \in \{1,...,M\}.
\end{aligned}
\end{equation}

In summary, the MPC problem is modified to 
\begin{align}
\label{sec3_ocp}
& \min_{DV} \ \sum_{i=1}^M J_i(x_{N_i},u_i,x_{e_{N_i}},u_{e_i}) \ s.t.\\ 	
& \nonumber \quad \quad  \eqref{sec2_stdA}-\eqref{sec2_stdD2}, \ \eqref{sec2_trmCon} \ \forall i \in \{1,\ldots,M\}, \ \forall t \in \{0,\ldots,T-1\}, \\
& \nonumber \quad \quad \eqref{sec3_trmA},\eqref{sec3_trmB},\eqref{sec3_trmC}-\eqref{sec3_trmG},\eqref{sec3_slemma} \ \forall i \in \{1,\ldots,M\}, \ \forall j \in \pazocal{N}_i.
\end{align}

Unlike the optimization problem \eqref{sec2_ocp}, the problem \eqref{sec3_ocp} has a finite number of constraints. Note that \eqref{sec3_ocp} provides an upper bound on the optimal cost in \eqref{sec2_ocp}, as the feasible set has been restricted through using the S-lemma.
The optimization variables in \eqref{sec3_ocp} become $DV = \{x_i(t),u_i(t),x_{e_i},u_{e_i},$ $x_i(T),\alpha_i,c_i,K_i,d_i,\Gamma_i,T_i,\rho_{ij},\sigma_{ij}^k,\tau_{ij}^l\}$ for all $i \in \{1,...,M\}$, $t \in \{0,...,T-1\}$, $j \in \pazocal{N}_i$, $k \in \{1,...,q_i\}$ and
$l \in \{1,...,r_i\}$. Note that although the target point $x_r$ might not be initially included in the terminal set, we aim that it belongs to the interior of the terminal set at steady state.

Problem \eqref{sec3_ocp} is non-convex, due to the nonlinear combinations of decision variables in some of the constraints. It can, however, be transformed into an SDP through the change of variables 
\begin{gather}
\label{sec3_map}
\begin{split}
& v_{1,i} = \alpha_{{N}_i}^{1/2}, \quad v_{3,i} = K_i \alpha_{{N}_i}^{1/2}, \quad v_{5,i} = \alpha_{N_i}^{1/2} \Gamma_i \alpha_i^{-1/2} \alpha_{N_i}^{1/2}, \\
& v_{2,i} = c_{{N}_i}, \quad v_{4,i} = K_i c_{{N}_i} + d_i, \quad v_{6,i} = T_i.
\end{split}
\end{gather}
Equation \eqref{sec3_map} defines a bijective map as long as $\alpha_i > 0 \ \forall i \in\{1,...,M\}$. Hence, the equations in \eqref{sec3_map} do not have to be added to the optimization problem.
As we consider affine terminal controllers, the only non-convex constraint remaining after \eqref{sec3_map} is \eqref{sec2_stdD2}, where the product $K_i x_{e_{N_i}}$ appears. To express this constraint as a linear combination of the variables in \eqref{sec3_map}, the artificial equilibrium point is constrained to be at the center of the terminal set (i.e. $x_{e_i}=c_i$). Using the map \eqref{sec2_map}, this constraint then becomes 
\begin{equation}
	\label{sec3_eq}
	u_{e_i} =  v_{4,i} \in \operatorname{int}(\pazocal{U}_i)
\end{equation}
% U_i W_i^\top v_{2,i} = A_i v_{2,i} +B_i u_{e_i}, \quad 
In the sequel, we denote \eqref{sec3_ocp} with \eqref{sec3_eq} replacing \eqref{sec2_stdD2} and the decision variables in \eqref{sec3_map} replacing the actual decision variables by \textbf{\textit{RTI}} as an abbreviation for distributed MPC with \textbf{\textit{R}}econfigurable \textbf{\textit{T}}erminal \textbf{\textit{I}}ngredients. We note that the derived LMIs are functions of the closed-loop states and hence cannot be solved offline.

\begin{remark}
	Although the MPC problem \textbf{\textit{RTI}} is written centrally, it can be solved online using distributed optimization algorithms (see e.g. \cite{aybat2016distributed}) thanks to its distributed structure.
	Some of these algorithms can be used without requiring a central coordinator such as the distributed primal-dual algorithm \cite{aybat2016distributed} and some variants of ADMM \cite{boyd2011distributed,banjac2019decentralized}. ADMM is used here due to its better convergence properties \cite{banjac2019decentralized}. In this case, each subsystem solves a local optimization problem iteratively while communicating only with its neighbours until consensus among shared variables is reached. The shared variables between two neighbours $i$ and $j$ are $\alpha_i$, $\alpha_j$, $c_i$, $c_j$, $U_j W_i^\top T_i W_i U_j^\top$, $U_i W_j^\top T_j W_j U_i^\top$, $x_i(t)$ and $x_j(t)$ for all $t \in \{0,...,T\}$.
\end{remark}

\begin{remark}
	\label{remark2}
	Unlike standard MPC schemes that require the solution of quadratic programs, the developed scheme yields an SDP that is more difficult to solve. To improve computational performance, the SDP can be approximated, using, for example, diagonal dominance \cite{ahmadi2017sum}. 
\end{remark}

\section{Feasibility and Stability}

In this section, the recursive feasibility of the proposed scheme and the asymptotic stability of the corresponding closed-loop system are established. The proof is inspired from \cite{kouvaritakis2016model,limon2008mpc}. We start by showing the stability of the closed-loop system and that the state and input trajectories converge to the artificial equilibrium trajectory along the lines of \cite{kouvaritakis2016model}. In Lemma \ref{lemma_st1}, we make use of the augmented dynamics $x(t+1)=Ax(t)+Bu(t)$, $x_e(t+1)=x_e(t)$ and $u_e(t+1)=u_e(t)$ and prove that this system has a stable equilibrium point. For this purpose, recall that the pair $(A,B)$ is controllable.

\begin{lemma}
	\label{lemma_st1} 
	The optimal solution of the MPC scheme based on \textbf{\textit{RTI}} is such that $\lim_{k \rightarrow \infty}(x^k(0)-x_e^k)=0$ and $\lim_{k \rightarrow \infty}(u^k(0)-u_e^k)=0$ where $x^k(0)$, $u^k(0)$ and $(x_e^k,u_e^k)$ are, respectively, the first entries of the state and input sequences and the artificial equilibrium corresponding to the optimal solution of \textbf{\textit{RTI}} at time $k$. Moreover, the point $(x,x_e,u_e)=(x_r,x_r,u_r)$ is stable where $u_r$ satisfies the equation $x_r = A x_r + B u_r$.
\end{lemma}

The proof follows the standard MPC stability argument \cite{kouvaritakis2016model} and is omitted in the interest of space. Although the terminal controller is updated at each time instant, the closed-loop dynamics are still time-invariant. In this case, the stability of the target point can still be inferred from Lemma \ref{lemma_st1} using the cost $J^k$ as a Lyapunov function, as $J^k$ is positive definite and $J^{k+1}-J^k$ is negative semidefinite.

Next, we show that, if the optimal state and input trajectories converge to the optimal artificial equilibrium trajectory, then, the optimal artificial equilibrium trajectory converges to the target point. The following proofs are inspired from \cite{limon2008mpc}, however (see lemma statements for the precise definitions), we prove in Lemma \ref{lemma_st2} that $E(x_e^k,\delta_a) \subseteq E(\bar{x}_e,\beta)$ and not just that $x_e^k \in E(\bar{x}_e,\beta)$. Moreover, we prove in Lemma \ref{lemma_st3} that $\|x-\bar{x}_e\|^2_P+\|\bar{x}_e-x_r\|^2_S<\|x_e^k-x_r\|_S^2$ $\forall x \in E(x_e^k,\delta_a)$ instead of $\|x_e^k-\bar{x}_e\|^2_P+\|\bar{x}_e-x_r\|^2_S<\|x_e^k-x_r\|_S^2$. Finally, we prove in Lemma \ref{lemma_st4} that $\lim_{k \rightarrow \infty} (x^k(0)-x_e^k)=0$ implies $\lim_{k \rightarrow \infty} (x_e^k-x_r)=0$ instead of $x^k(0)-x_e^k=0$ implies $ x_e^k-x_r=0$. The proofs of these lemmas are found in the Appendix.

\begin{lemma}%[{{\cite{limon2008mpc}}}]
	\label{lemma_st2}
	Let $\kappa^k(x)=K^kx+d^k$ be the terminal controller corresponding to the optimal solution of \textbf{\textit{RTI}}. 
	%Consider the stabilizing terminal controller $\kappa^k(x)=K^kx+d^k$ where $d^k$ satisfy \eqref{online_eq1} for $x_e^k$, $u_e^k$ and $K^k$. 
	Then, there exist $\underline{\lambda}(x_e^k)<1$, 
	$\bar{\lambda}(x_e^k)>1$, $\lambda \in [\underline{\lambda}(x_e^k),\bar{\lambda}(x_e^k)]$, 
	$\delta_a>0$ and $\beta>0$ such that the equilibrium point $(\bar{x}_e,\bar{u}_e)=(x_r+\lambda (x_e^k-x_r),u_r+\lambda (u_e^k-u_r))$ satisfies $E(x_e^k,\delta_a) \subset %\bar{\pazocal{X}}_f(\bar{x}_e)$ 
	E(\bar{x}_e,\beta)$ where $E(x_e^k,\delta_a)=\{x:\|x-x_e^k\|^2_P \leq \delta_a\}$, $E(\bar{x}_e,\beta)=\{x:\|x-\bar{x}_e\|^2_P \leq \beta\}$ is a positively-invariant set with respect to $\bar{x}_e$ under the controller $\bar{\kappa}(x)=K^kx+\bar{d}$ and $\bar{d}$ satisfies the equation $\bar{x}_e = (A+BK^k)\bar{x}_e + B\bar{d}$. %and $\bar{d}$ satisfies \eqref{online_eq1} for $\bar{x}_e$, $\bar{u}_e$ and $K^k$.
\end{lemma}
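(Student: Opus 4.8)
The plan is to move the optimal artificial equilibrium $(x_e^k,u_e^k)$ along the ray through the origin, carry the terminal control gain $K^k$ along unchanged while re-centering its affine part, and then inflate an ellipsoid around the scaled equilibrium that is invariant under the shifted closed loop and swallows a small $P$-ball around $x_e^k$. The mechanism behind the invariance is the $P$-contraction of the optimal closed-loop matrix $A_{cl}^k:=A+BK^k$, which is built into the stability constraints of \eqref{ocp_mod3}.

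\emph{Step 1 (contraction).} Since the optimizer of \eqref{ocp_mod3} is feasible, undoing the change of variables \eqref{dmpc_map} shows that the optimal $\alpha_i^k,c_i^k,K_i^k,d_i^k,\Gamma_i^k$ satisfy \eqref{dmpc4_st1}, \eqref{conte_cons_online1}, \eqref{conte_cons_online2}, hence, exactly as in the proof of Proposition~\ref{slemma4}, the matrix forms of \eqref{conte1_2} and \eqref{conte2_2}. Writing $A_{cl}^k=\sum_i U_i^\top(A_i+B_iK_i^k)W_i$ and using the $0$--$1$ identities $U_iU_j^\top=\delta_{ij}I$, $U_iW_i^\top W_i=U_i$ together with $P=\sum_iU_i^\top P_iU_i$ and $Q=\sum_iW_i^\top Q_iW_i$, I evaluate $\sum_i z^\top W_i^\top(\cdot)W_i z\ge0$ on the PSD matrices from \eqref{conte1_2} and invoke \eqref{conte2_2} (whose left-hand matrix is precisely the residual $\sum_iW_i^\top\Gamma_i^kW_i$) to obtain $\|A_{cl}^kz\|_P^2\le\|z\|_P^2-\|z\|_Q^2$ for all $z\in\mathbb{R}^n$. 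Assumption~\ref{asm4} and $i\in\pazocal{N}_i$ give $Q\succ0$, so $\eta:=\max_{\|z\|_P=1}\|A_{cl}^kz\|_P<1$ and $\|A_{cl}^kz\|_P\le\eta\|z\|_P$ for every $z$.

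\emph{Step 2 (scaled equilibrium and controller).} For $\lambda>0$ put $\bar x_e:=\lambda x_e^k$, $\bar u_e:=\lambda u_e^k$, $\bar d:=\lambda d^k$. Linearity of \eqref{sec21_sys2} yields $\bar x_e=A\bar x_e+B\bar u_e$, and scaling the affine relations in \eqref{online_eq1} by $\lambda$ shows that $\bar d$ satisfies the equalities in \eqref{online_eq1} for $(\bar x_e,\bar u_e,K^k)$. Because $x_e^k\in\epsilon_1\pazocal{X}$, $u_e^k\in\epsilon_1\pazocal{U}$ by \eqref{online_eq3}, and $x_r=0$ lies in the interior of $\pazocal{X}$ and of $\pazocal{U}$ (Assumption~\ref{asm2}, $\epsilon_1<1$), convexity of the tightened sets gives $(\bar x_e,\bar u_e)\in\epsilon_1(\pazocal{X}\times\pazocal{U})$ for every $\lambda\le1$, while for $\lambda$ slightly above $1$ the pair still lies in $\interior(\pazocal{X}\times\pazocal{U})$ by the slack between $\epsilon_1(\pazocal{X}\times\pazocal{U})$ and $\pazocal{X}\times\pazocal{U}$; this produces $\underline\lambda(x_e^k)<1<\bar\lambda(x_e^k)$ for which $(\bar x_e,\bar u_e)\in\interior(\pazocal{X}\times\pazocal{U})$ and \eqref{online_eq1} holds for all $\lambda$ in the interval.

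\emph{Step 3 (admissible invariant ellipsoid and inclusion).} Fix such a $\lambda$. Since $\bar x_e\in\interior(\pazocal{X})$ and the affine map $\bar\kappa(x)=K^kx+\bar d$ takes $\bar x_e$ to $\bar u_e\in\interior(\pazocal{U})$, a small enough $\beta>0$ (chosen uniformly in $\lambda$ over a compact subinterval containing $1$, by continuity) gives $E(\bar x_e,\beta)\subseteq\pazocal{X}$ and $\bar\kappa_i(x_{\pazocal{N}_i})\in\pazocal{U}_i$ for all $i$ and all $x\in E(\bar x_e,\beta)$. For any such $x$, $x^+-\bar x_e=A_{cl}^k(x-\bar x_e)$, so by Step~1, $\|x^+-\bar x_e\|_P^2\le\eta^2\|x-\bar x_e\|_P^2\le\eta^2\beta\le\beta$, i.e.\ $E(\bar x_e,\beta)$ is positively invariant under $\bar\kappa$ and admissible. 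Finally, by the triangle inequality $\|x-\bar x_e\|_P\le\sqrt{\delta_a}+|1-\lambda|\,\|x_e^k\|_P$ for $x\in E(x_e^k,\delta_a)$; choosing $\delta_a>0$ and $\lambda$ with $|1-\lambda|$ small enough that $(\sqrt{\delta_a}+|1-\lambda|\,\|x_e^k\|_P)^2\le\beta$ yields $E(x_e^k,\delta_a)\subset E(\bar x_e,\beta)$, as claimed. The load-bearing step is Step~1: transporting the \emph{local}, neighbourhood-indexed quadratic inequalities \eqref{conte1_2}--\eqref{conte2_2} through the mappings $U_i,W_i$ into a single \emph{global} contraction inequality for $A_{cl}^k$, and checking that the $\Gamma_i^k$-terms collapse exactly to the negative-semidefinite quantity supplied by \eqref{conte2_2}; a lesser subtlety is the two-sided admissibility of the scaling in Step~2, where for $\lambda>1$ one leaves the $\epsilon_1$-tightened sets and must lean on the slack against the true constraints, although only $\lambda\le1$ is used downstream in Lemma~\ref{lemma_st3}.
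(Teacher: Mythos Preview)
Your argument is correct and follows essentially the same route as the paper: scale the equilibrium along the ray through the origin with $\bar d=\lambda d^k$, invoke the global Lyapunov inequality $P\succeq (A+BK^k)^\top P(A+BK^k)+Q+{K^k}^\top RK^k$ (which the paper also extracts from \eqref{yarab}) to get $P$-contraction and hence invariance of $E(\bar x_e,\beta)$, and then nest the ellipsoids. The only stylistic difference is in establishing constraint satisfaction on $E(\bar x_e,\beta)$: the paper first fixes a ball $E(x_e^k,\delta_b)$ on which $(x,\kappa^k(x))\in\gamma(\pazocal X\times\pazocal U)$ and then uses the additive split $(x,\bar\kappa(x))=(x,\kappa^k(x))+(0,-(1-\lambda)d^k)\in\gamma(\pazocal X\times\pazocal U)+(1-\gamma)(\pazocal X\times\pazocal U)$, whereas you argue directly from $(\bar x_e,\bar u_e)\in\interior(\pazocal X\times\pazocal U)$ and continuity; both are valid and yield the same conclusion.
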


\begin{lemma}
	\label{lemma_st3}
	%Let $(x_e^k,u_e^k) \in \interior(\pazocal{X} \times \pazocal{U})$ artificial equilibrium corresponding to the optimal solution of \textbf{\textit{RTI}}. 
	Let $\sigma>0$ be such that $S > \sigma P$ and assume that $\delta_1 \in \left(0, \frac{(1+3\sigma) - \sqrt{1+6\sigma+5\sigma^2}}{2}\right)$ and
	$\lambda \in\left(\frac{1+\delta_1-\sqrt{\delta_1^2-(1+3\sigma)\delta_1+\sigma^2}}{1+\sigma}, \frac{1+\delta_1+\sqrt{\delta_1^2-(1+3\sigma)\delta_1+\sigma^2}}{1+\sigma}\right)$. Then, there exist $\delta_a>0$
	such that $\|x-\bar{x}_e\|^2_P+\|\bar{x}_e\|^2_S<\|x_e^k-x_r\|_S^2
	$ for all $x \in E(x_e^k,\delta_a)$
	%where $(\bar{x}_e,\bar{u}_e)=(\lambda x_e^k, \lambda u_e^k)$ and $E(x_e^k,\delta_a)=\{x:\|x-x_e^k\|^2_P \leq \delta_a\}$. 
	Moreover, the range in which $\lambda$ can be selected intersects the open set (0,1). %$\left(\frac{1+\delta_1-\sqrt{\delta_1^2-(1+3\sigma)\delta_1+\sigma^2}}{1+\sigma}, \frac{1+\delta_1+\sqrt{\delta_1^2-(1+3\sigma)\delta_1+\sigma^2}}{1+\sigma}\right)\cap(0,1)\neq \emptyset$.
\end{lemma}

\begin{remark}
	\label{rem1}
	Both Lemma \ref{lemma_st2} and Lemma \ref{lemma_st3} impose constraints on $\lambda$. However, by appropriately choosing the other parameters, these constraints are compatible with each other. In particular, Lemma \ref{lemma_st2} requires $\lambda$ to be in a set containing one in its interior. On the other side, Lemma \ref{lemma_st3} requires $\lambda$ to be in a  range whose upper bound is between zero and one. Note that this upper bound tends to one as $\delta_1$ tends to zero, that is,
	$$
	\lim_{\delta_1 \rightarrow 0} \left( \frac{1+\delta_1}{1+\sigma} + \frac{\sqrt{(1+\delta_1)^2+(1+\sigma)(\sigma-1-3\delta_1)}}{1+\sigma} \right) = 
	1.
	$$
	Thus, if $\delta_1$ is chosen sufficiently small, the upper bound in Lemma \ref{lemma_st3} (which tends to one) can be made higher than the lower bound in Lemma \ref{lemma_st2} (which is lower than one).
\end{remark}

\begin{remark}
	\label{rem2}
	Both Lemma \ref{lemma_st2} and Lemma \ref{lemma_st3} impose constraints on the value of the parameter $\delta_a$. In particular, Lemma \ref{lemma_st2} requires that $E(x_e^k,\delta_a) \subseteq E(\bar{x}_e,\beta)$, whereas Lemma \ref{lemma_st3} requires that $E(x_e^k,\delta_a) \subseteq \pazocal{X}_1$. It is easy to see that, if $\delta_a$ satisfies both conditions, the set $E(x_e^k,\delta_a)$ is positively invariant with respect to the equilibrium $(x_e^k,u_e^k)$ under the terminal controller $\kappa^k(x)=K^kx+d^k$. The invariance condition can be shown in the same way as the invariance condition of the set $E(\bar{x}_e,\beta)$ in Lemma \ref{lemma_st2}. Moreover, the constraint satisfaction condition holds since $E(x_e^k,\delta_a) \subseteq E(\bar{x}_e,\beta) \subseteq E({x}_e^k,\delta_b)$ and for all $x \in E(x_e^k,\delta_b)$, $(x,K^kx+d^k) \in \gamma \pazocal{X} \times \gamma \pazocal{U}$ according to Lemma \ref{lemma_st2}.
\end{remark}

\begin{lemma}
	\label{lemma_st4} 
	If for a given initial state $x_0$, the sequence of optimal solutions to \textbf{\textit{RTI}} is such that $\lim_{k \rightarrow \infty}(x^k(0)-x_e^k)=0$, then $\lim_{k \rightarrow \infty} (x_e^k-x_r)=0$.
\end{lemma}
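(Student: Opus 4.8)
The plan is a contradiction argument in the spirit of \cite{limon2008mpc}: if the artificial equilibria fail to converge to the target $x_r=0$, I will construct, for infinitely many iterations, a point feasible for \eqref{ocp_mod3} whose cost is strictly below the optimal value, which is impossible. First I would record that the optimal value $V^k$ of \eqref{ocp_mod3} is non-increasing in $k$ (the usual shifted-candidate estimate, as in the proof of Lemma~\ref{lemma_st1}), so $||x_e^k||^2_S\le V^k\le V^0<\infty$ and $\{(x_e^k,u_e^k)\}$ is bounded. Supposing $x_e^k\not\to0$, pick $\varepsilon>0$ and a subsequence with $||x_e^{k_j}||\ge\varepsilon$ and, refining once more, $(x_e^{k_j},u_e^{k_j})\to(x_e^\star,u_e^\star)$ with $x_e^\star\ne0$; since $x_e^{k_j}\in\epsilon_1\pazocal{X}$, $u_e^{k_j}\in\epsilon_1\pazocal{U}$ with $\epsilon_1<1$ and $0=x_r\in\interior(\pazocal{X})$, $0\in\interior(\pazocal{U})$, the limit is an equilibrium in $\interior(\pazocal{X}\times\pazocal{U})$, and the hypothesis $x^{k_j}(0)-x_e^{k_j}\to0$ gives $x^{k_j}(0)\to x_e^\star$.

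Next I would fix the scaling. Applying Lemma~\ref{lemma_st3} at $(x_e^\star,u_e^\star)$ I choose $\sigma>0$ with $S>\sigma P$, $\delta_1$ small, and $\lambda$ in the prescribed interval; by Remark~\ref{rem1} I may in addition take $\lambda$ in the interval of Lemma~\ref{lemma_st2} and as close to $1$ as needed. Evaluated at $x=x_e^\star$ the strict inequality of Lemma~\ref{lemma_st3} has slack at least $c\varepsilon^2$ for some constant $c=c(\lambda,P,S)>0$, so there is a $j$-independent $\bar\delta>0$ with $||x-\lambda x_e^{k_j}||^2_P+\lambda^2||x_e^{k_j}||^2_S<||x_e^{k_j}||^2_S$ for every $x\in E(x_e^{k_j},\bar\delta)$ and all large $j$; shrinking $\bar\delta$ if needed (Remark~\ref{rem2}), Lemma~\ref{lemma_st2} also yields $\beta>0$ and $\bar\kappa^{k_j}(x)=K^{k_j}x+\bar d$ under which $E(\lambda x_e^{k_j},\beta)\supseteq E(x_e^{k_j},\bar\delta)$ is positively invariant and constraint-admissible. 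Since $||x^{k_j}(0)-x_e^{k_j}||^2_P\to0$, for $j$ large $x^{k_j}(0)\in E(x_e^{k_j},\bar\delta)$.

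I would then build the candidate at iteration $k_j$: artificial equilibrium $(\lambda x_e^{k_j},\lambda u_e^{k_j})$ (admissible as a convex combination of $(x_e^{k_j},u_e^{k_j})\in\epsilon_1(\pazocal{X}\times\pazocal{U})$ and the origin, so \eqref{online_eq3} is met), terminal controller $K^{k_j}$, input sequence $u(t)=\bar\kappa^{k_j}(x(t))$ for $t=0,\dots,T-1$ from $x(0)=x^{k_j}(0)$, and a product of local ellipsoids centred at the $\lambda x^{k_j}_{e_i}$ that is invariant, constraint-admissible, and contains $E(\lambda x_e^{k_j},\beta)$. Feasibility of \eqref{sec21_sys2}, \eqref{sec21_con2} and \eqref{online_eq2} then follows from invariance of $E(\lambda x_e^{k_j},\beta)$. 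Because $P=\operatorname{diag}(P_i)$ is a Lyapunov matrix for the closed loop under $K^{k_j}$ and \eqref{online_lazar3} is homogeneous in the state error, $||z(t{+}1)||^2_P-||z(t)||^2_P\le-\ell(z(t))$ holds along $z(t)=x(t)-\lambda x_e^{k_j}$, where $\ell(z)=||z||^2_{Q+(K^{k_j})^\top RK^{k_j}}$ is exactly the summed stage cost evaluated at $z(t)$ after using $u(t)=K^{k_j}x(t)+\bar d$; telescoping gives $\sum_{t=0}^{T-1}\ell(z(t))+||z(T)||^2_P\le||x^{k_j}(0)-\lambda x_e^{k_j}||^2_P$. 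Hence the candidate cost satisfies $V^{k_j}_{\mathrm{cand}}\le||x^{k_j}(0)-\lambda x_e^{k_j}||^2_P+\lambda^2||x_e^{k_j}||^2_S$, which by the previous paragraph is strictly less than $||x_e^{k_j}||^2_S\le V^{k_j}$, contradicting $V^{k_j}\le V^{k_j}_{\mathrm{cand}}$. Therefore $x_e^\star=0$; since every convergent subsequence of the bounded sequence $\{x_e^k\}$ has limit $0$, $\lim_{k\to\infty}x_e^k=0$.

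The main obstacle is the terminal-set construction in the candidate: producing a \emph{product-structured} terminal set centred at $\lambda x_e^{k_j}$ that simultaneously satisfies every matrix inequality of \eqref{ocp_mod3} --- invariance \eqref{longeq3}, state and input constraints \eqref{mpc_sc4_mod}, \eqref{sec3_ic_mod}, and stability \eqref{dmpc4_st1_mod}, \eqref{conte_cons_online_mod}, \eqref{conte_cons_online2} --- while still containing the coupled ellipsoid $E(\lambda x_e^{k_j},\beta)$ from Lemma~\ref{lemma_st2}. Re-centring and re-scaling the offline ingredients of \eqref{conte3} is the natural route, but it needs care because of the non-homogeneous term $W_iU_i^\top P_i^{-1}U_iW_i^\top$ in \eqref{dmpc4_st1_mod} and the explicit centre-dependence of \eqref{mpc_sc4_mod}. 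The remaining routine-but-delicate point is making the uniformity in $j$ rigorous (a fixed $\bar\delta$, the slack bound $c\varepsilon^2$).
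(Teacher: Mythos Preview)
Your contradiction strategy is the paper's: scale the optimal equilibrium to $\bar x_e=\lambda x_e^k$, use Lemma~\ref{lemma_st2} to obtain an invariant neighbourhood around $\bar x_e$ containing $x^k(0)$, roll out the affine controller $\bar\kappa$, telescope the Lyapunov inequality for $P$ to get $\bar J\le\|x^k(0)-\bar x_e\|_P^2+\|\bar x_e\|_S^2$, and then invoke Lemma~\ref{lemma_st3} to drop this strictly below $\|x_e^k\|_S^2\le J^k$. The paper executes this at a \emph{single} iteration $k$ chosen with $\|x_e^k\|_P^2\ge\delta_c$ and $x^k(0)\in E(x_e^k,\delta_a)$, without extracting a convergent subsequence; your subsequence route is one legitimate way to make the existence of such a $k$ rigorous (since $\delta_a$ a~priori depends on $k$), though the paper simply asserts it.

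The product-structured terminal-set obstacle you flag is real, and the paper glosses over it too: the candidate is merely declared ``feasible'' and only the cost bound is written down, with no explicit verification of the LMIs in \eqref{ocp_mod3}. One simplification worth recording: because \eqref{online_eq3} forces $c_i=x_{e_i}$, the off-block term $A_i v_{2,i}+B_i v_{4,i}-U_iW_i^\top v_{2,i}$ in \eqref{longeq3} equals $A_i x_{e_{\pazocal N_i}}+B_i u_{e_i}-x_{e_i}=0$ by the equilibrium equation, so \eqref{longeq3}, \eqref{dmpc4_st1_mod}, \eqref{conte_cons_online_mod} and \eqref{conte_cons_online2} are independent of the centre and can be carried over verbatim from the optimal solution with the same $K^k$, $\alpha^k$, $\Gamma^k$, $T^k$. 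Only \eqref{online_eq2_mod}, \eqref{mpc_sc4_mod} and \eqref{sec3_ic_mod} actually feel the shift to $\lambda x_e^k$, which narrows your ``main obstacle'' considerably.
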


\begin{theorem}
	\label{th_stability}
	The proposed MPC scheme is recursively feasible and the closed-loop system under this controller is asymptotically stable.
\end{theorem}
\begin{proof}
	Assume that the distributed MPC problem is initially feasible at time $k=0$. Assume that the corresponding optimal predicted state trajectory is $\{x^0(0),x^0(1),...,x^0(T)\}$, the optimal predicted input trajectory is $\{u^0(0),u^0(1),...,u^0(T-1)\}$, the optimal artificial equilibrium $(x_e^0,u_e^0)$, the optimal terminal set parameters are $\{\alpha^0,c^0\}$ and the optimal terminal control parameters are $\{K^0,d^0\}$. Since the optimal terminal set $\pazocal{X}_f$ is designed ensuring the positive invariance properties, then, the state trajectory $\{x^0(1),x^0(2),...,x^0(T),(A+BK^0)x^0(T)+Bd^0\}$, the input trajectory $\{u^0(1),u^0(2),...,u^0(T-1),K^0x^0(T)+d^0\}$, the optimal artificial equilibrium $(x_e^0,u_e^0)$, the optimal terminal set parameters $\{\alpha^0,c^0\}$ and the optimal terminal control parameters $\{K^0,d^0\}$ are a feasible solution to the distributed MPC problem at $k=1$. In other words, the distributed MPC problem is feasible in the next time instant. By induction, the distributed MPC problem is feasible for all $k \geq 1$, or equivalently, recursively feasible.
	
	Lemma \ref{lemma_st1} implies that $\lim_{k \rightarrow \infty}(x^k(0)-x_e^k)=0$ and Lemma \ref{lemma_st4} further implies that $\lim_{k \rightarrow \infty} x_e^k=x_r$ and hence, $\lim_{k \rightarrow \infty}x^k(0)=x_r$. Furthermore, Lemma \ref{lemma_st1} implies that the equilibrium point $x=x_r$ is stable. In conclusion, the origin of the closed-loop system under the proposed distributed MPC scheme is asmptotically stable.
\end{proof}

\section{SIMULATION RESULTS}

The efficacy of the proposed scheme is investigated by means of a benchmark example and an interconnected system example. In both examples, four tracking MPC schemes are compared; centralized MPC (CNT-\hspace{-0.25pt}\cite{limon2008mpc}) where the maximal invariant terminal set is computed offline, distributed MPC (DST-\hspace{-0.25pt}\cite{conte2013cooperative}) where ellipsoidal terminal sets are computed offline, the proposed approach (\textbf{\textit{RTI}}) where the terminal ingredients are computed online and the proposed approach with diagonal dominance (\textbf{\textit{RTI+DD}}) where the terminal ingredients are also computed online. We use the benchmark example to visualize the evolution of the optimal trajectories and terminal ingredients of the proposed approach and to compare the conservatism imposed by the distibuted MPC schemes with respect to the centralized scheme. On the other side, we use the interconnected system example to explore the performance and computational complexity of the proposed approaches. We solve the considered optimization problems using MATLAB with YALMIP \cite{lofberg2004yalmip} and MOSEK \cite{mosek}. Unless otherwise stated, all distributed MPC schemes are solved using ADMM \cite{boyd2011distributed}.

\subsection{Benchmark Example}

\begin{figure*}
	\label{res_fig1}
	%\centering
	\captionsetup[subfigure]{oneside,margin={0.6cm,0cm}}
	\begin{subfigure}{0.3\textwidth}
		\centering
		\includegraphics[scale=0.35]{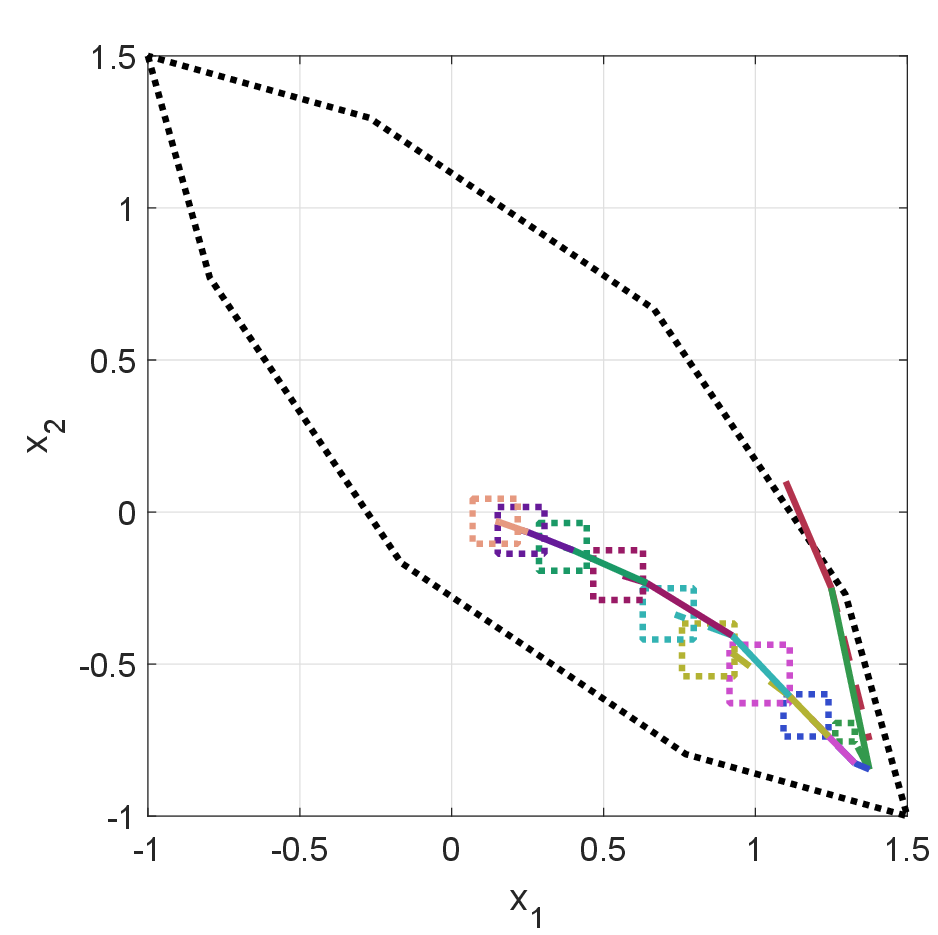}
		\caption{}
		\label{Ex1_Fig1}
	\end{subfigure}
	\hspace{0.25cm}
	\begin{subfigure}{0.3\textwidth}
		\centering
		\includegraphics[scale=0.35]{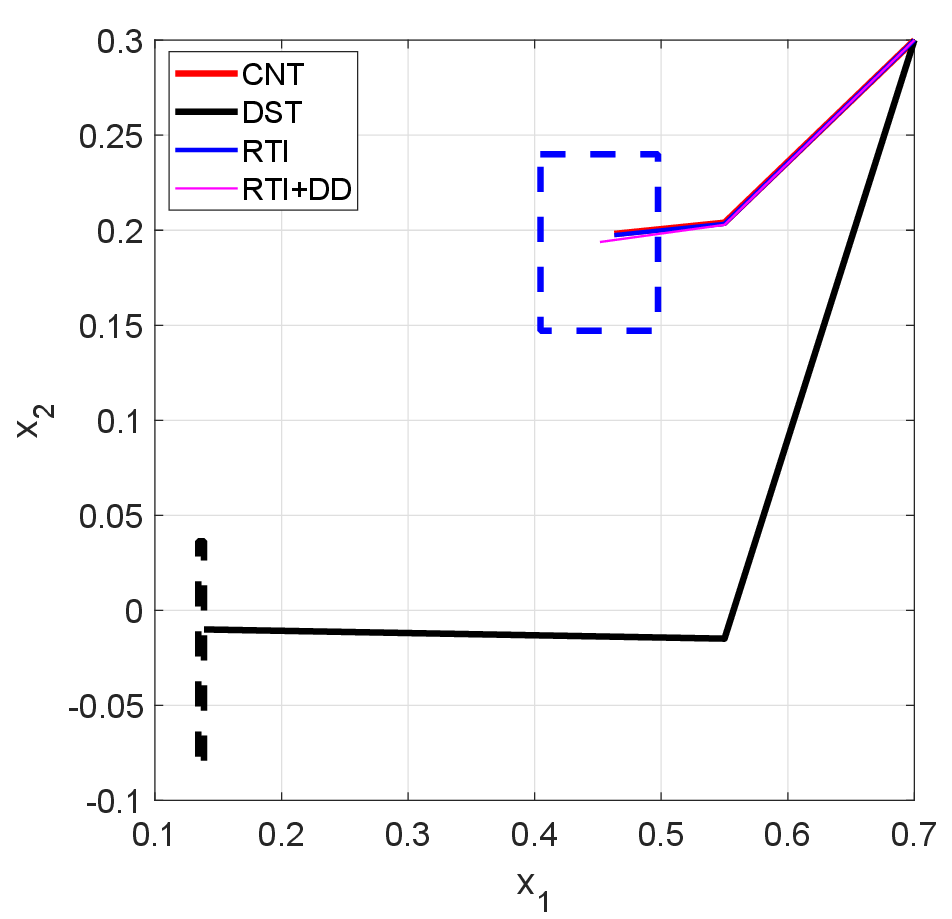}
		\caption{}
		\label{Ex1_Fig2}
	\end{subfigure}
	\hspace{0.3cm}
	\begin{subfigure}{0.3\textwidth}
		\centering 
		\includegraphics[scale=0.35]{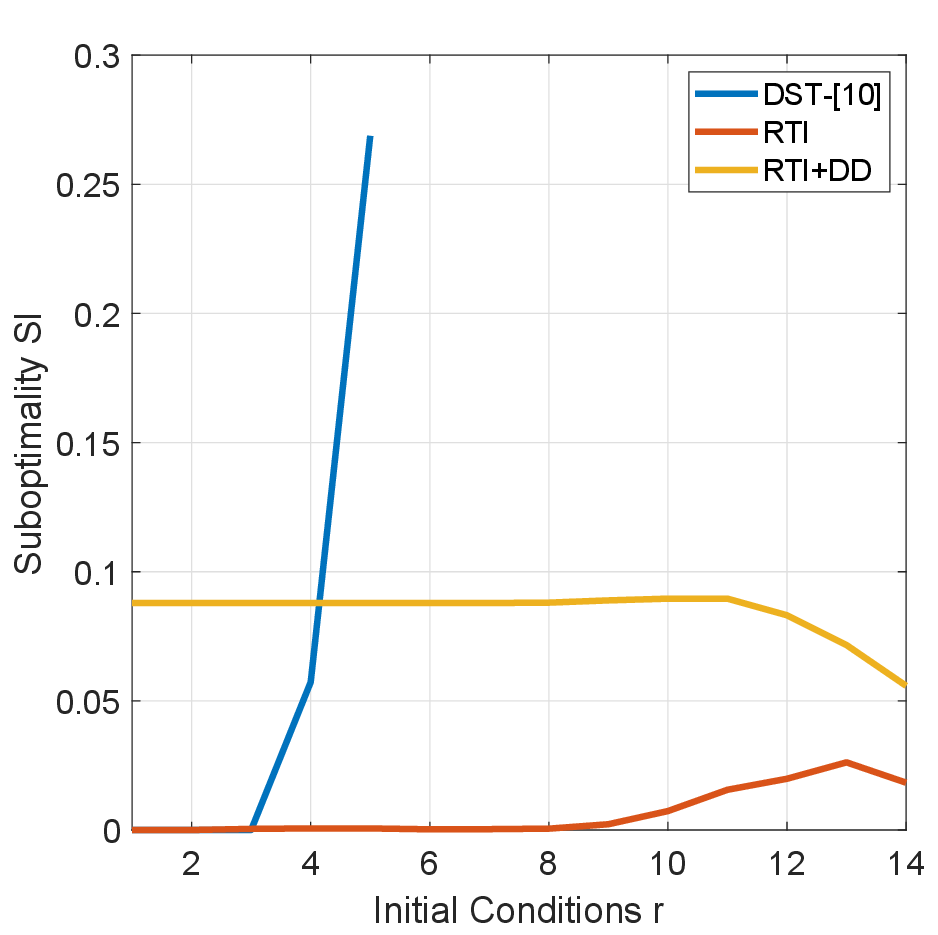}
		\caption{}
		\label{Ex2_Fig}
	\end{subfigure}
	\caption{(a) Benchmark example: evolution of predicted trajectories and terminal sets for \textbf{\textit{RTI}} solved recursively for 10 timesteps (solid: first prediction step, dashed: second prediction step, dotted: terminal set), (b) Benchmark example: comparison of the predicted optimal state trajectories and terminal sets of four tracking MPC schemes, (c) Interconnected system example: Suboptimality indexes of all distributed schemes with respect to the centralized scheme.}
\end{figure*}

The dynamics of the illustrative example is given by $x_1^+=2x_1+0.5x_2-u_1$ and $x_2^+=0.5x_1+2x_2-u_2$, with state and input constraints
$ -5 \leq x_i \leq 5, \ -0.25 \leq u_i \leq 1, \ \text{for } i \in \{1,2\}
$. The system is divided into two neighbouring subsystems with states $x_1$ and $x_2$ and inputs $u_1$ and $u_2$, respectively. The matrices of the cost function are chosen to be $Q_1=Q_2=0.5I_2$, $R_1=R_2=0.1$ and $S_1=S_2=1$, the target point $x_r=[0 \ 0]^\top$ and the prediction horizon $T=2$. The matrix $P$ is computed following \cite{conte2016distributed}.
Fig. \ref{Ex1_Fig1} shows the evolution of the predicted state trajectories and terminal sets of \textbf{\textit{RTI}} solved recursively for 10 timesteps starting from $[x_1 \ x_2]=[1.1 \ 0.1]$. This initial state is outside the maximal invariant terminal set (shown in black) used with CNT-\hspace{-0.25pt}\cite{limon2008mpc}. Note that the optimal state trajectories converge to the target point (i.e. the origin) and the corresponding terminal sets converge to a set containing this target point. Although CNT-\hspace{-0.25pt}\cite{limon2008mpc} and \textbf{\textit{RTI+DD}} yield similar optimal trajectories (omitted in the interest of space), DST-\hspace{-0.25pt}\cite{conte2013cooperative} is found to be initially infeasible starting from this initial condition. This indicates that the feasible region of DST-\hspace{-0.25pt}\cite{conte2013cooperative} is possibly smaller than those of the other three aproaches. Although ellipsoidal terminal sets are utilized, the terminal sets appear as rectangles in Fig. \ref{Ex1_Fig1} since they are the Cartesian products of two one-dimensional ellipsoids.
Fig. \ref{Ex1_Fig2} compares the predicted state trajectories and terminal sets of the four schemes when the optimal control problems are solved once starting from an initial condition $x_1=0.7$ and $x_2=0.3$, that is chosen such that all schemes are initially feasible. 
%The maximal invariant terminal set shown in Fig. \ref{Ex1_Fig1} is omitted. 
Although \textbf{\textit{RTI}} and \textbf{\textit{RTI+DD}} lead to very similar predicted trajectories to that of CNT-\hspace{-0.25pt}\cite{limon2008mpc}, DST-\hspace{-0.25pt}\cite{conte2013cooperative} results in a predicted trajectory with higher open-loop cost. This is mainly because the terminal set of DST-\hspace{-0.25pt}\cite{conte2013cooperative} is found to be relatively conservative, i.e. closer to the origin compared to those of  \textbf{\textit{RTI}} and \textbf{\textit{RTI+DD}}.
% Moreover, the terminal set of DST-\hspace{-0.25pt}\cite{conte2013cooperative} is found to be smaller than those of \textbf{\textit{RTI}} and \textbf{\textit{RTI+DD}}. 
Note that the terminal set of \textbf{\textit{RTI}} and that of \textbf{\textit{RTI+DD}} (which is very small in Fig. \ref{Ex1_Fig2}) are different since the cost functions of these MPC problems are not strongly convex with respect to the size and center of the terminal set.

%\begin{figure}
%	\centering
%	\includegraphics[scale=0.35]{Figs/Ex1_Fig1.eps}
%	\caption{Evolution of predicted trajectories and terminal sets when solving \textbf{\textit{RTI}} recursively for 10 timesteps (solid: first prediction step, dashed: second prediction step, dotted: terminal set)}
%	\label{Ex1_Fig1}
%\end{figure}
%\begin{figure}
%	\centering
%	\includegraphics[scale=0.35]{Figs/Ex1_Fig2.eps}
%	\caption{Comparison of the predicted optimal state trajectories and terminal sets of four tracking MPC schemes}
%	\label{Ex1_Fig2}
%\end{figure}

\subsection{Interconnected System Example}

\begin{figure}
	\centering
	\scalebox{1}{		\begin{tikzpicture}[thick,scale=0.8, every node/.style={scale=0.8}]
			% nodes
			\node[draw,thick,rectangle,rounded corners=0.1cm,minimum size=.8cm] (pga1) {Agent 1};
			\node[draw,thick,rectangle,rounded corners=0.1cm,minimum size=.8cm, right = 0.8cm of pga1] (pga2) {Agent 2};
			\node[draw,thick,rectangle,rounded corners=0.1cm,minimum size=.8cm, right = 0.8cm of pga1, above = 0.5cm of pga2] (pga3) {Agent 3};
			\node[draw,thick,rectangle,rounded corners=0.1cm,minimum size=.8cm, right = 0.8cm of pga2] (pga5) {Agent 5};
			\node[draw,thick,rectangle,rounded corners=0.1cm,minimum size=.8cm, right = 0.8cm of pga2, above = 0.5cm of pga5] (pga4) {Agent 4};
			\node[draw,thick,rectangle,rounded corners=0.1cm,minimum size=.8cm, right = 0.8cm of pga5] (pga6) {Agent 6};
			\node[draw,thick,rectangle,rounded corners=0.1cm,minimum size=.8cm, right = 0.8cm of pga5, above = 0.5cm of pga6] (pga7) {Agent 7};
			% connections
			\draw[Stealth-Stealth,thick] (pga1) -- (pga2);
			\draw[Stealth-Stealth,thick] (pga2) -- (pga3);
			\draw[Stealth-Stealth,thick] (pga2) -- (pga5);
			\draw[Stealth-Stealth,thick] (pga3) -- (pga4);
			\draw[Stealth-Stealth,thick] (pga4) -- (pga5);
			\draw[Stealth-Stealth,thick] (pga5) -- (pga6);
			\draw[Stealth-Stealth,thick] (pga5) -- (pga7);
		\end{tikzpicture}}
	\caption{Topology of the considered interconnected system}
	\label{Topology}
	%\vspace{-0.75cm}
\end{figure}
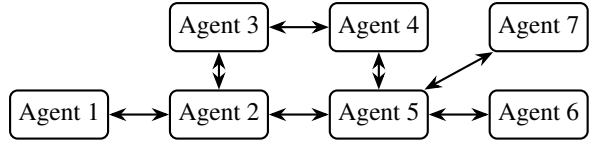

We consider a 7-subsystem interconnected system whose topology is shown in Fig. \ref{Topology}. The dynamics of the $i$-th subsystem (partially adopted from Chapter 2 in \cite{kouvaritakis2016model}) is given by $x_i(t+1) = A_i x_i(t) + B_i u_i(t) + \sum_{j \in \pazocal{N}_i} A_{ij}x_j$ where
$$
A_i = 
\begin{bmatrix}
	1.3 & 2 \\ 0 & 1.15 \\
\end{bmatrix},
\quad
B_i = 
\begin{bmatrix}
	0 \\ 0.0787 \\
\end{bmatrix},
\quad
A_{ij} = 
\begin{bmatrix}
0 & 0.5 \\ 0 & 0 \\
\end{bmatrix}
$$
%Notice that these dynamics can be represented as in \eqref{sec2_stdA} in the obvious way.
The $i$-th subsystem is subject to the constraints $[-8 \ -8]^\top \leq x_i \leq [8 \ 8]^\top$ and $-1 \leq u_i \leq 1$. The cost function weights are given by $Q=I_{14}$, $R=10I_7$ and $S=10I_{14}$ where $I_{e}$ is an identity matrix of size $e$. The matrix $P$ is computed offline as in \cite{conte2016distributed}. The origin is chosen to be the target point $x_r$. 

First, we solve the optimal control problem of each scheme $s \in \left\{\right.$CNT-\hspace{-0.25pt}\cite{limon2008mpc}, $\text{DST}$-\hspace{-0.25pt}\cite{conte2013cooperative}, \textbf{\textit{RTI}} and \textbf{\textit{RTI+DD}}$\left.\right\}$ centrally to compare the open-loop cost $J_s^{olc}$ obtained by each scheme when solved to optimality. We use a prediction horizon $T=5$ for all schemes. Fig. \ref{Ex2_Fig} shows the suboptimality index between the centralized and distributed schemes, defined as $SI = (J_{s}^{olc} - J^{olc}_{\text{CNT-\hspace{-0.25pt}\cite{limon2008mpc}}}) / J^{olc}_{\text{CNT-\hspace{-0.25pt}\cite{limon2008mpc}}}$ for the initial conditions $x_{0,i}=[-0.2r \ 0.015r]^\top$ for all $r \in \{1,...,14\}$. The index of  $\text{DST}$-\hspace{-0.25pt}\cite{conte2013cooperative} is shown only for $r \in \{1,...,5\}$ because this scheme is not feasible for the other initial conditions. As the initial condition moves further from the target point, the open-loop cost of $\text{DST}$-\hspace{-0.25pt}\cite{conte2013cooperative} becomes higher than those of \textbf{\textit{RTI}} and \textbf{\textit{RTI+DD}}. This demonstrates the conservatism imposed by $\text{DST}$-\hspace{-0.25pt}\cite{conte2013cooperative} compared to \textbf{\textit{RTI}} and \textbf{\textit{RTI+DD}}. Notice also that \textbf{\textit{RTI+DD}} is more conservative than \textbf{\textit{RTI}} (see Remark \ref{remark2}). Finally, note that the observations based on $J_s^{olc}$ in Fig.\ref{Ex2_Fig} are slightly different from those based on the running cost, which is defined by $J_{s}^{clc}=\sum_{t=1}^{N} \left[ \|x^t(0)-x_r\|_Q^2 + \|u^t(0)-u_r\|_R^2 \right]$ because this cost is different from the cost function in \eqref{sec3_ocp}.

Second, we solve the distributed MPC problems using ADMM \cite{boyd2011distributed} to compare the performance and computational complexity of the distributed schemes. We run the ADMM algorithm for $N=10$ timesteps with the parameter $\rho=1000$ for 100 iterations and denote the optimal running cost of the distributed scheme $s$ obtained using ADMM by $J^{admm}_{s}$; note that $J^{admm}_{s}$ converges to $J_{s}^{clc}$ only asymptotically. We denote the time required by subsystem 5 per timestep to implement ADMM using scheme $s$ by $T^{admm}_{s}$. We choose subsystem 5 as it has the largest number of neigbhours.  Since using longer prediciton horizons is one way of reducing the conservatism imposed by DST-\hspace{-0.25pt}\cite{conte2013cooperative}, we consider two versions of DST-\hspace{-0.25pt}\cite{conte2013cooperative}; $\text{DST}_{5}$-\hspace{-0.25pt}\cite{conte2013cooperative} with $T=5$ and $\text{DST}_{20}$-\hspace{-0.25pt}\cite{conte2013cooperative} with $T=20$. Table \ref{Ex2_Table} compares the distributed MPC schemes in terms of $J^{admm}_{s}$ and $T^{admm}_{s}$ by computing the mean and standard deviation of $J^{dif}_s=|J^{admm}_{s}-J_{s}^{clc}|/J_{s}^{clc}$ and $T^{admm}_{s}$ over all initial conditions for which scheme $s$ is feasible. Despite using longer prediction horizons, $\text{DST}_{20}$-\hspace{-0.25pt}\cite{conte2013cooperative} is still only feasible for $r \in \{1,...,12\}$.
The scheme \textit{\textbf{RTI+DD}} has better convergence properties and smaller computational cost compared to \textit{\textbf{RTI}}, but the latter comes at a fraction of the open-loop cost (see Fig. \ref{Ex2_Fig}).
%have similar convergence properties, but the latter comes at a fraction of the computation cost.
Although the convergnces properties of $\text{DST}_5$-\hspace{-0.25pt}\cite{conte2013cooperative} are better than those of \textit{\textbf{RTI}}, they are similar to those of \textit{\textbf{RTI+DD}}.
All schemes, however, converge 
%more slowly than $\text{DST}_5$-\hspace{-0.25pt}\cite{conte2013cooperative}, but 
faster than $\text{DST}_{20}$-\hspace{-0.25pt}\cite{conte2013cooperative} possibly due to the larger number of shared variables in $\text{DST}_{20}$-\hspace{-0.25pt}\cite{conte2013cooperative}.
%; moreover, recall that Table \ref{Ex2_Table} does not include initial conditions for which $\text{DST}_{5}$-\hspace{-0.25pt}\cite{conte2013cooperative} and $\text{DST}_{20}$-\hspace{-0.25pt}\cite{conte2013cooperative} are infeasible.
The convergence properties of $\text{DST}_{20}$-\hspace{-0.25pt}\cite{conte2013cooperative} could potentially be improved by tuning the ADMM parameters, however $\text{DST}_{5}$-\hspace{-0.25pt}\cite{conte2013cooperative} and $\text{DST}_{20}$-\hspace{-0.25pt}\cite{conte2013cooperative} still yield smaller feasible regions and possibly higher running costs. While the feasible region of $\text{DST}_{20}$-\hspace{-0.25pt}\cite{conte2013cooperative} can be enlarged by further increasing the prediction horizon, this would come at an additional computational cost, which is already higher than \textit{\textbf{RTI+DD}} (though not \textit{\textbf{RTI}}). We note that CNT-\hspace{-0.25pt}\cite{limon2008mpc} requires less time ($\leq 0.1s$ per timestep) than all distributed schemes (Table \ref{Ex2_Table}) due to the ADMM iterations; the distributed schemes, however, generally have other advantages as mentioned in the beginning of Section I.

%\begin{figure}
%	\centering 
%	\includegraphics[scale=0.25]{Figs/Ex2_Fig.eps}
%	\caption{Suboptimality indexes of all distributed schemes with respect to the centralized scheme for all initial conditions.}
%	\label{Ex2_Fig}
%\end{figure}

\section{CONCLUSION}

A novel distributed MPC scheme is proposed for tracking piecewise constant references for interconnected systems. The terminal ingredients are updated online at each time instant. The resulting optimal control problem is approximated using a quadratic program while ensuring recursive feasibility and asymptotic stability. In simulations, the proposed approach has relatively larger feasible regions and stronger scalability properties compared to standard schemes. Ongoing work concentrates on extending this approach to uncertain systems.

\begin{table}
	\normalsize
	\centering
	\caption{Comparison of the four considered distributed MPC schemes in terms of the number of feasible initial conditions $r_{fs}$, the mean $\mu_{c}$ and standard deviation $\sigma_{c}$ of $J_s^{dif}$ and the mean $\mu_{t}$ and standard deviation $\sigma_{t}$ of $T^{admm}_{s}$}
	\begin{tabular}{|c|c|c|c|c|c|}
		\hline
		& $r_{fs}$ & $\mu_{c}$ & $\sigma_{c}$ & $\mu_{t}$ & $\sigma_{t}$ \\
		\hline
		$\text{DST}_5$-\hspace{-0.25pt}\cite{conte2013cooperative} & 5 & 0.0047  &  0.0031  &  0.2539  &  0.0063 \\
		\hline
		$\text{DST}_{20}$-\hspace{-0.25pt}\cite{conte2013cooperative} & 12 & 0.0385  &  0.0087  &  0.5003  &  0.0258 \\
		\hline
		\textbf{\textit{RTI}} & 14 & 0.0089  &  0.0058  &  2.8309  &  0.0962 \\
		\hline
		\textbf{\textit{RTI+DD}} & 14 & 0.0050  &  0.0029  &  0.4455  &  0.0132 \\
		\hline
	\end{tabular}
	\label{Ex2_Table}
	\vspace{-0.5cm}
\end{table}

\section*{APPENDIX}

\begin{refproof}[Proof of Lemma \ref{lemma_st2}:]
This proof is similar to that of Lemma 1 in \cite{limon2008mpc}.
Since $(x_e^k,u_e^k)$ is the artificial equilibrium corresponding to the optimal solution of \textbf{\textit{RTI}}, then $(x_e^k,u_e^k) \in \interior(\pazocal{X} \times \pazocal{U})$.
Define $d_{min}\in[0,1)$ as the smallest scalar such that $(x_e^k,u_e^k) \in d_{min}\pazocal{X} \times d_{min}\pazocal{U}$ and let $\gamma \in (d_{min},1)$. 
Note also that $u_e^k=K^k x_e^k +d^k$ since $\kappa^k(x)=K^kx+d^k$ is the terminal controller corresponding to the optimal solution of \textbf{\textit{RTI}}.
Hence, there exists $\delta_b > 0$ such that for all $x \in E(x_e^k,\delta_b)$, $(x,K^kx+d^k) \in \gamma \pazocal{X} \times \gamma \pazocal{U}$. 
% Note also that $x_e^k \in \pazocal{X}_f$, otherwise the terminal set is not invariant.
Since $\bar{u}_e=K^k\bar{x}_e+\bar{d}$, then $\bar{d}-d^k=-(1-\lambda)d^k$ as $\bar{x}_e=\lambda x_e^k$ and $\bar{u}_e=\lambda u_e^k$. 
%Since $\kappa^k(x)=K^kx+d^k$ is the terminal controller corresponding to the optimal solution of \textbf{\textit{RTI}}, condition \eqref{online_eq1} is satisfied and $u_e^k=K^k x_e^k +d^k$.
%Since $\bar{u}_e=K^k\bar{x}_e+\bar{d}$ by assumption, then $\bar{d}-d^k=-(1-\lambda)d^k$ as $\bar{x}_e=\lambda x_e^k$ and $\bar{u}_e=\lambda u_e^k$. 
Choose $\lambda \in (\underline{\lambda}(x_e^k),\bar{\lambda}(x_e^k))$ such that $\bar{x}_e = x_r+\lambda (x_e^k-x_r) \in \interior(E(x_e^k,0.25\delta_b))$ and $(0,-(1-\lambda)d^k) \in (1-\gamma)\pazocal{X} \times (1-\gamma)\pazocal{U}$ where $\underline{\lambda}(x_e^k)$ and $\bar{\lambda}(x_e^k)$ are the minimum and maximum values satisfying these inequalities. It is easy to verify, through the last two conditions, that $\underline{\lambda}(x_e^k)<1$ and $\bar{\lambda}(x_e^k)>1$.
Hence, there exists $\beta>0$ such that $x_e^k \in \interior(E(\bar{x}_e,\beta))$ and $E(\bar{x}_e,\beta) \subset E(x_e^k,\delta_b)$. Therefore, there exists $\delta_a>0$ such that $E(x_e^k,\delta_a) \subset E(\bar{x}_e,\beta) \subset E(x_e^k,\delta_b)$.
It remains to prove that $E(\bar{x}_e,\beta)$ is a positively invariant set with respect to $(\bar{x}_e,\bar{u}_e)$ under the controller $\bar{\kappa}(x)=K^kx+\overline{d}$. For all $x \in E(\bar{x}_e,\beta)$, $\|(x-\bar{x}_e)\|^2_P \leq \beta$ and hence, $\|(x-\bar{x}_e)\|_P - \|(x-\bar{x}_e)\|^2_{Q+K^{k^\top} R K^k} \leq \beta$ since $Q>0$ and $R>0$. Thus, $\|(x-\bar{x}_e)\|^2_{(P-Q-{K^k}^\top R {K^k})} \leq \beta$. It is easy to verify from \eqref{sec2_trmF} and \eqref{sec2_trmG} that the matrix $P$ satisfies the Lyapunov inequality $P \geq (A+BK^k)^\top P (A+BK^k) + Q +{K^k}^\top R K^k$.
%where $K^k$ is the control gain corresponding to the optimal solution of \eqref{ocp_mod3}. 
Thus, $\|(x-\bar{x}_e)\|^2_{(A+BK^k)^\top P (A+BK^k)} \leq \beta$, or equivalently, $\|(x^+-\bar{x}_e)\|^2_P \leq \beta$. In addition, $(x,K^kx+\bar{d})=(x,K^kx+d^k)+(0,\bar{d}-d^k)=(x,K^kx+d^k)+(0,-(1-\lambda)d^k) \in (\gamma \pazocal{X} \times \gamma \pazocal{U}) + ((1-\gamma) \pazocal{X} \times (1-\gamma) \pazocal{U}) =\pazocal{X} \times \pazocal{U}$.
\end{refproof}

\begin{refproof}[Proof of Lemma \ref{lemma_st3}:]
Note that $\|x_e^k-\bar{x}_e\|^2_P = (1-\lambda)^2 \|x_e^k-x_r\|^2_P$ since $x_e^k-\bar{x}_e= (1-\lambda) (x_e^k-x_r)$. 
Notice also that $x-\bar{x}_e = (x-x_e^k)+(1-\lambda)(x_e^k-x_r) $. Hence,
$
\|x-\bar{x}_e\|^2_P
=
\|x-x_e^k\|^2_P 
+ 2 (1-\lambda){(x_e^k-x_r)}^\top P (x-{x_e^k})
+(1-\lambda)^2 \|x_e^k-x_r\|^2_P.
$
Consider a constant $\delta_1 > 0$ and the set $ \pazocal{X}_1$  defined as
$
\pazocal{X}_1 = \left\{x \in \mathbb{R}^{n} : 
%\begin{aligned}
\|x-{x}_e^k\|^2_P \leq \delta_1 \|{x}_e^k-x_r\|^2_P, \
\  |{(x_e^k-x_r)}^\top P (x-{x}_e^k)| \leq \right.$ $\left.\delta_1 \|{x}_e^k-x_r\|^2_P 
%\end{aligned}
\right\}.
$ 
For every $\delta_1$ we can select $\delta_a$ small enough such that $E(x_e^k,\delta_a) \subset \pazocal{X}_1$. Therefore, for all $x \in E(x_e^k,\delta_a)$,
$
\|x-\bar{x}_e\|^2_P \leq 
% \delta_1 {x_e^k}^\top P x_e^k + 2 \delta_1 (1-\lambda) {x_e^k}^\top P x_e^k + (1-\lambda)^2 {x_e^k}^\top P x_e^k
\|x_e^k-x_r\|^2_{(\delta_1 P + 2 \delta_1 (1-\lambda)P + (1-\lambda)^2 P)}.
$
Since $\bar{x}_e-x_r=\lambda (x_e^k-x_r)$, then $\|\bar{x}_e-x_r\|^2_S = \lambda^2 \|x_e^k-x_r\|^2_S$. Hence,
$
\|x-\bar{x}_e\|^2_ P + \|\bar{x}_e-x_r\|^2_S \leq  %{x_e^k}^\top (\delta_1 P + 2 \delta_1 (1-\lambda)P + (1-\lambda)^2 P) x_e^k + \lambda^2 {x_e^k}^\top S x_e^k = 
\|x_e^k-x_r\|^2_{(\delta_1 P + 2 \delta_1 (1-\lambda)P + (1-\lambda)^2 P + \lambda^2 S)}.
$
To prove that $\|x-\bar{x}_e\|^2_P+\|\bar{x}_e-x_r\|^2_S<\|x_e^k-x_r\|_S^2$, it is required to find conditions on $\delta_1$ and $\lambda$ so that 
$
(1-\lambda^2)S-(1-\lambda)^2 P -2 \delta_1 (1-\lambda) P - \delta_1 P > 0.
$
Since $S>\sigma P$, it suffices to ensure that
$
(1-\lambda^2)\sigma-(1-\lambda)^2 -2 \delta_1 (1-\lambda) - \delta_1 > 0
$
or, equivalently,
$
-(1+\sigma)\lambda^2+2(1+\delta_1)\lambda+(\sigma-1-3\delta_1)>0.
$
Since the quadartic is concave in $\lambda$, its roots are required to be real and distinct so that there exists $\lambda$ which satisfies the strict inequality. The roots are
\begin{equation}
\label{john_eq}
\frac{1+\delta_1}{1+\sigma} \pm \frac{\sqrt{(1+\delta_1)^2+(1+\sigma)(\sigma-1-3\delta_1)}}{1+\sigma}
\end{equation}
and are real and distinct as long as
$
(1+\delta_1)^2+(1+\sigma)(\sigma-1-3\delta_1)>0,
$
or, equivalently,
$
\delta_1^2-(1+3\sigma)\delta_1+\sigma^2>0.
$
This in turn is a convex quadratic in $\delta_1$ whose roots
$
\frac{(1+3\sigma) \pm \sqrt{(1+3\sigma)^2-4\sigma^2}}{2}
$
are real, distinct and positive since
$
(1+3\sigma)^2-4\sigma^2 = 1+6\sigma+5\sigma^2 > 0
$
as $\sigma>0$. If we then pick $\delta_1 \in \left(0, \frac{(1+3\sigma) - \sqrt{1+6\sigma+5\sigma^2}}{2}\right)$, the roots of \eqref{john_eq} are real and distinct. Thus, for any $\sigma>0$, there exists a small enough $\delta_1$ such that there exists $\lambda$ which satisfies the desired condition.
It remains to show that $\lambda$ can be selected in the interval (0, 1). For this, it suffices to prove that it is always possible to choose at least one of the roots in \eqref{john_eq} to be between zero and one. Consider the larger root $\left({1+\delta_1} +\sqrt{(1+\delta_1)^2+(1+\sigma)(\sigma-1-3\delta_1)}\right)/(1+\sigma)$ in \eqref{john_eq}. Note that this root is always positive. For this root to be smaller than or equal to 1, it is required that $1+\sigma \geq {1+\delta_1} +\sqrt{(1+\delta_1)^2+(1+\sigma)(\sigma-1-3\delta_1)}$. Notice that this inequality holds only if $\sigma > \delta_1$. Simplifying and squaring the desired inequality reduces to $\delta_1 (\sigma+1)\geq 0$, which is always the case since $\sigma$ and $\delta_1$ are positive constants. 
% Similarly, consider the smaller root $\left({1+\delta_1} -\sqrt{(1+\delta_1)^2+(1+\sigma)(\sigma-1-3\delta_1)}\right)/(1+\sigma)$. For this root to be greater than 0, it is required that ${1+\delta_1} -\sqrt{(1+\delta_1)^2+(1+\sigma)(\sigma-1-3\delta_1)} \geq 0$. By squaring and simplifying this inequality, it is found that the inequality $\sigma \leq 1+3\delta_1$ should hold which is always the case for any $\sigma \leq 1$ since $\delta_1 > 0$. 
%	\left\{ 
%	\begin{matrix}
%		\frac{1-\sigma}{1+\sigma} \\
%		1
%	\end{matrix}
%	\right.
In conclusion, for any positive $\delta_1 < \sigma$ such that $\sigma P \leq S$ and $\delta_1 < \frac{(1+3\sigma) - \sqrt{1+6\sigma+5\sigma^2}}{2}$, there exists $\lambda \in (0,1)$ such that
$
(1-\lambda^2)\sigma-(1-\lambda)^2 -2 \delta_1 (1-\lambda) - \delta_1 > 0,
$
and consequently the condition
$
\|x-\bar{x}_e\|^2_P+\|\bar{x}_e-x_r\|^2_S<\|x_e^k-x_r\|_S^2
$
is satisfied.
\end{refproof}

\begin{refproof}[Proof of Lemma \ref{lemma_st4}:]
Assume, for the sake of contradiction, that $\lim_{k \rightarrow \infty}(x^k(0)-x_e^k)=0$ but the sequence of optimal equilibrium points $\{x_e^k\}_{k=0}^\infty$ either does not converge, or does but its limit is not the target point $x_r$. 
In both cases, there exists $\delta_c>0$ such that $\|x_e^k-x_r\|^2_P \geq \delta_c$ for infinitely many $k$. 
Since $\lim_{k \rightarrow \infty}(x^k(0)-x_e^k)=0$, it is always possible to pick an arbitrarily large $k$ such that $x^k(0) \in E(x_e^k,\delta_a)$ where $\|x_e^k-x_r\|^2 \geq \delta_c$ and $\delta_a$ satisfies the conditions in Lemma \ref{lemma_st2} $\left(E(x_e^k,\delta_a) \subseteq E(\bar{x}_e,\beta)\right)$ and Lemma \ref{lemma_st3} $\left(E(x_e^k,\delta_a) \subseteq \pazocal{X}_1\right)$. 
%Based on the definiton of $\pazocal{X}_1$ and the assumption that $E(x_e^k,\delta_a)\pazocal{X}_1$. Thus, $x^k(0) \in \pazocal{X}_1$.
%	Consider that there exists a constant $\delta_1 > 0$ such that the set $ \pazocal{X}_1$ is defined as
%	\begin{equation*}
%		\pazocal{X}_1 = 
%		\left\{
%		\begin{aligned}
%		x \in \mathbb{R}^{n} :
%		& \|x-{x}_e^k\|_P \leq \delta_1 \|x_e^k\|_P \\
%		&|{x_e^k}^\top P (x-x_e^k)| \leq \delta_1 \|x_e^k\|_P.
%		\end{aligned}
%		\right\}
%	\end{equation*}
%	Since $\lim_{k \rightarrow \infty}(x^k(0)-x_e^k)=0$ and $\|x_e^k\|_P \geq \delta_c$, assume that there exists a sufficiently small positive constant $\delta_2<\delta_1$ such that $x^k(0) \in \pazocal{X}_2 \subset \pazocal{X}_1$ where
%	$$
%	\pazocal{X}_2 = \{x \in \mathbb{R}^{n} : \|x-x_e^k\|_P \leq \delta_2 \|x_e^k\|_P \}.
%	$$
According to Remark \ref{rem2}, it is always the case that the selected $\delta_a>0$ makes the set $E(x_e^k,\delta_a)$ positively invariant with respect to $x_e^k$ under the optimal controller $\kappa^k(x)=K^kx+d^k$. Since $x^k(0) \in E(x_e^k,\delta_a)$, the optimal cost $J^k$ is given by
$
{J}^k = \sum_{t=0}^{T-1} \{ \|{x}^{k}(i)-x_e^k\|^2_Q + \|{x}^{k}(i)-x_e^k\|^2_{{K^k}^\top R {K^k}}	\} 	+ \|{x}^{k}(T)-x_e^k\|^2_P + \|x_e^k-x_r\|^2_S.
$
According to Lemma \ref{lemma_st2}, $E({x}_e^k,\delta_a) \subseteq E(\bar{x}_e,\beta)$ which is a positively invariant set with respect to $\bar{x}_e$ under the terminal controller $\bar{\kappa}(x)=K^kx+\bar{d}$. Thus, there exists a feasible solution starting from the initial condition $\bar{x}(0) = x^k(0) \in E(x_e^k,\delta_a)$ aiming to converge to the non-optimal equilibrum point $\bar{x}_e$. Denote the cost of this feasible solution as
$
\bar{J} = \sum_{t=0}^{T-1} \{
\|\bar{x}(i)-\bar{x}_e\|^2_Q + \|\bar{x}(i)-\bar{x}_e\|^2_{{K}^{k^\top} R {K}^k}	\}
+ \|\bar{x}(T)-\bar{x}_e\|^2_P + \|\bar{x}_e-x_r\|^2_S.
$
Note that $J^k < \bar{J}$ since $J^k$ is the optimal cost. It is easy to verify from \eqref{sec2_trmF} and \eqref{sec2_trmG} that the matrix $P$ satisfies the Lyapunov inequality
$
P \geq (A+BK^k)^\top P (A+BK^k) + Q +{K^k}^\top R K^k
$ and hence that
$
\begin{aligned}
\bar{J} \leq \|{x}^k(0)-\bar{x}_e\|^2_P + \|\bar{x}_e-x_r\|^2_S.
\end{aligned}
$
According to Lemma \ref{lemma_st3}, $\|x-\bar{x}_e\|^2_P+\|\bar{x}_e-x_r\|^2_S<\|{x}_e^k-x_r\|_S^2
$ for all $ x \in E({x}_e^k,\delta_a)$. Since $x^k(0) \in E({x}_e^k,\delta_a)$, then $\|x^k(0)-\bar{x}_e\|^2_P+\|\bar{x}_e-x_r\|^2_S<\|{x}_e^k-x_r\|_S^2
$. Note that $\|{x}_e^k-x_r\|_S^2 \leq J^k$ which contradicts the optimality of $J^k$.
\end{refproof}

\section*{ACKNOWLEDGMENT}

The authors would like to thank Prof. Roy Smith and Dr. Georgios Darivianakis for the fruitful discussions on the topic.
%This research work is financially supported by the Swiss Innovation Agency InnoSuisse under the Swiss Competence Center for Energy Research SCCER $FEEB\&D$.

\bibliographystyle{unsrt}
\bibliography{references}

\end{document}